\begin{document}

\newcommand{\BigFig}[1]{\parbox{12pt}{\Huge #1}}
\newcommand{\BigZero}{\BigFig{0}}

\renewcommand{\topfraction}{0.99}
\renewcommand{\bottomfraction}{0.99}
\renewcommand{\dbltopfraction}{0.99}
\renewcommand{\textfraction}{0.1}
\renewcommand{\floatpagefraction}{0.99}
\renewcommand{\dblfloatpagefraction}{0.99}

 \newtheorem{theo}{Theorem}

\def\U#1{{%
\def\O{\mbox{O}}
\def\u{\mbox{u}}
\mathcode`\u=\mu
\mathcode`\O=\Omega
\mathrm{#1}}}
\def\ii{{\mathrm{i}}}
\def\jj{\,\mathrm{j}}                   
\def\ee{{\mathrm{e}}}
\def\dd{{\mathrm{d}}}
\def\cc{{\mathrm{c.c.}}}
\def\Re{\mathop{\mathrm{Re}}}
\def\Im{\mathop{\mathrm{Im}}}
\def\bra#1{\langle #1|}
\def\braa#1{\langle\langle #1|}
\def\ket#1{|\mbox{$#1$}\rangle}
\def\kett#1{|\mbox{$#1$}\rangle\rangle}
\def\bracket#1{\langle\mbox{$#1$}\rangle}
\def\bracketi#1#2{\langle\mbox{$#1$}|\mbox{$#2$}\rangle}
\def\bracketii#1#2#3{\langle\mbox{$#1$}|\mbox{$#2$}|\mbox{$#3$}\rangle}
\def\vct#1{\mathbf{#1}}
\def\fracpd#1#2{\frac{\partial#1}{\partial#2}}
\def\rank{\mathop{\mathrm{rank}}} 
\def\sub#1{_{\mbox{\scriptsize \rm #1}}}
\def\sur#1{^{\mbox{\scriptsize \rm #1}}}
\def\kagome{kagom\'{e} }
\def\kagometype{kagom\'{e}-type }
\def\Kagometype{Kagom\'{e}-type }

\title{
Plane-wave scattering by self-complementary metasurfaces\\
in terms of electromagnetic duality and Babinet's principle
}

\author{Yosuke Nakata}
\email{nakata@giga.kuee.kyoto-u.ac.jp}
\affiliation{Department of Electronic Science and Engineering, Kyoto University, Kyoto 615-8510, Japan}
\author{Yoshiro Urade}
\affiliation{Department of Electronic Science and Engineering, Kyoto University, Kyoto 615-8510, Japan}
\author{Toshihiro Nakanishi}
\affiliation{Department of Electronic Science and Engineering, Kyoto University, Kyoto 615-8510, Japan}
\author{Masao Kitano}
\email{kitano@kuee.kyoto-u.ac.jp}
\affiliation{Department of Electronic Science and Engineering, Kyoto University, Kyoto 615-8510, Japan}

\date{\today}

\begin{abstract}
We investigate theoretically electromagnetic plane-wave scattering by self-complementary metasurfaces.
By using Babinet's principle extended to metasurfaces with resistive elements, 
we show that the frequency-independent transmission and reflection are realized 
for normal incidence of a circularly polarized plane wave onto a self-complementary metasurface, 
even if there is diffraction.
Next, we consider two special classes of self-complementary metasurfaces.
We show that self-complementary metasurfaces with rotational symmetry can 
act as coherent perfect absorbers, 
and those with translational symmetry compatible with their self-complementarity
can split the incident power equally, even for oblique incidences.
\end{abstract}

\pacs{81.05.Xj, 78.67.Pt, 42.25.Bs}
\maketitle

\section{Introduction}

Metamaterials are artificially engineered materials
composed of lower-level components.\cite{Sihvola2007}
These components are called meta-atoms.
Various extraordinary electromagnetic 
properties of metamaterials have been predicted
and demonstrated, such as negative refractive index,
\cite{Veselago1968,Shelby2001a} artificial magnetism,\cite{Pendry1999}
super focusing,\cite{Pendry2000}  cloaking,\cite{Leonhardt2006,Pendry2006,Schurig2006}
and giant chirality.\cite{Rogacheva2006,Gansel2009} 

As in other fields of physics, such as crystallography 
and atomic or molecular spectroscopy, 
symmetry plays a fundamental role in metamaterials.
The symmetry of the shape or alignment of meta-atoms
affects the electromagnetic response of metamaterials.
A group-theoretical method of treating symmetry in metamaterials
has been developed and applied for designing and optimizing metamaterials.\cite{Wongkasem2006,Baena2007,Padilla2007, Reinke2011}
This method has also been utilized for designing two-dimensional metamaterials, called {\it metasurfaces}.\cite{Isik2009,Bingham2008} 
However, these studies dealt only with groups of isometries with a fixed point,
that is to say, {\it point groups}.

In addition to isometric symmetry of metamaterials, 
the theory of electromagnetism has another
symmetry with respect to the interchange of electric and magnetic fields.
This symmetry is called the {\it electromagnetic duality},
and can be generalized to a continuous symmetry with respect to 
internal rotations of electric and magnetic fields.
This continuous symmetry is directly related to a helicity conservation law.\cite{Calkin1965, Zwanziger1968, Deser1976, Drummond1999, Barnett2012, Cameron2012, Bliokh2013,Fernandez-Corbaton2013a}
We note that these symmetries had been gradually discovered
since the late 19th century.\cite{Heaviside1892, Larmor1897, Rainich1925}

The electromagnetic duality is closely related to Babinet's principle.\cite{Babinet1837}
Given a thin metallic metasurface, we can construct the complementary 
metasurface by using a structural inversion to interchange the holes and the metals. 
Babinet's principle relates the scattering fields of the 
complementary metasurfaces to those of the original one.
This principle is based on the fact that the structural inversion 
is consistent with electromagnetic duality.
A rigorous Babinet's principle for electromagnetic waves 
was simultaneously formulated by several 
groups.\cite{Booker1946,Copson1946,Mexner1946,Leontovich1946,*Landau1984, Kotani1947} 
It was extended to absorbing surfaces,\cite{Neugebauer1957}
impedance surfaces,\cite{Baum1974, *Baum1995} and surfaces with lumped elements.\cite{Moore1993}
It is important to note that the generalization for impedance surfaces was performed
by extending the structural inversion to the impedance one.
Recently, several complementary metasurfaces have been fabricated
and tested in the microwave,\cite{Falcone2004,Al-Naib2008}
terahertz,\cite{Chen2007} and near-infrared regions.\cite{Zentgraf2007}
Near-field images of complementary metasurfaces have 
been obtained in the terahertz range,\cite{Bitzer2011}
and switching of reflection has been realized by using a complementary metasurface 
with a twisted nematic cell in the near-infrared region.\cite{Lee2013}
Babinet's principle is also useful for designing
negative refractive index metamaterials.\cite{Zhang2013}

Generally, the structure of a metasurface is not invariant 
under impedance inversion.
If a metasurface is identical to its complement, 
it is called a {\it self-complementary} metasurface.
As an application, 
such self-complementary artificial surfaces
have been used for efficient polarizers.\cite{Beruete2007,Ortiz2013}
In the field of antenna design, it is known that
a self-complementary antenna has a constant input impedance.\cite{Mushiake1992, *Mushiake1996}
Therefore, self-complementary metasurfaces are expected to
exhibit a frequency-independent response.
It has been shown that an almost self-complementary spiral terahertz metasurface
has a constant response only in the high-frequency range.\cite{Singh2009}
There have been some efforts to achieve a frequency-independent 
response with self-complementary checkerboard metasurfaces,\cite{Compton1984,Takano2009}
but it is known empirically that such a metasurface does not exist.
Self-complementary metasurfaces 
have not been analyzed thoroughly enough; for example,  conditions for the frequency-independent 
response have not been discussed thoroughly, and an  elaborate theory is needed.
In this paper, we study electromagnetic scattering by self-complementary metasurfaces more rigorously and establish several useful theorems.
In particular, we focus on the incidences of circularly polarized plane waves 
onto self-complementary metasurfaces,
because circularly polarized light
matches with electromagnetic duality.

This article is organized as follows. In Sec.~\ref{sec:2},
we start by discussing the electromagnetic duality.
In Sec.~\ref{sec:3},
we review Babinet's principle for resistive metasurfaces,
and construct some relations between complex coefficients of transmission and reflection.
We analyze electromagnetic plane-wave scattering by self-complementary metasurfaces,
and derive their general properties in Sec.~\ref{sec:4}.
Numerical simulations are performed in order to confirm our theory in Sec.~\ref{sec:5}.
Finally, we summarize the conclusion in Sec.~\ref{sec:6}.

\section{\label{sec:2} Electromagnetic duality}
The electric and magnetic fields are represented by 
a polar vector field $\vct{E}$ and an axial vector field $\vct{H}$, respectively. 
Under spatial inversion, polar vectors are reversed in direction, while
axial vectors are invariant.
If we fix the orientation of the three-dimensional Euclid space $\mathbb{E}_3$,
axial vectors are represented by two polar vectors corresponding to the
two orientations of $\mathbb{E}_3$, 
respectively.\cite{Burke1985}
Two types of vectors are required in order not to assume
specific orientation of space $\mathbb{E}_3$.
An electromagnetic field  is represented by $(\vct{E},\vct{H})$.
The set of electromagnetic fields constitutes a vector space, namely, a 
direct sum of vector spaces,
with the scalar product defined by $s (\vct{E},\vct{H}):=(s\vct{E},s\vct{H})$ for a 
scalar $s$,
and the sum $(\vct{E}_1,\vct{H}_1)+(\vct{E}_2,\vct{H}_2):=(\vct{E}_1+\vct{E}_2,\vct{H}_1+\vct{H}_2)$.

Maxwell's theory of electromagnetism has an 
internal symmetry between electric and magnetic fields,
but the symmetry operation is not a simple exchange.
Maxwell's equations without sources and the vacuum constitutive relations ($\vct{D}=\varepsilon_0 \vct{E}$ and $\vct{B}=\mu_0 \vct{H}$) are invariant
under the following transformation:
\begin{align}
 \vct{E}&\rightarrow Z_0 \vct{H}, &\vct{H}&\rightarrow -\vct{E}/Z_0,  \label{eq:1}\\
\vct{D}&\rightarrow \vct{B}/Z_0, &\vct{B}&\rightarrow -Z_0\vct{D},  \label{eq:2}
\end{align} 
with an electric displacement $\vct{D}$ and magnetic flux density $\vct{B}$.
The permittivity,  permeability, and impedance of vacuum are represented by 
$\varepsilon_0$,  $\mu_0$, and $Z_0=\sqrt{\mu_0/\varepsilon_0}$, respectively.
This internal symmetry is called the ``electromagnetic duality.''
Note that
we need to fix an orientation of $\mathbb{E}_3$ to 
exchange polar and axial vectors by using Eqs.~(\ref{eq:1}) and (\ref{eq:2}).
This is similar to considering an imaginary number 
$\ii$ as an anti-clockwise rotation by $\pi/2$, 
which determines an orientation of the complex plane. It is also valid to use $\jj = -\ii$ as an anti-clockwise rotation
(this is the convention in engineering).
In the rest of this work, we use the right-handed system for internal transformations.

The electromagnetic duality extends to a continuous symmetry of electromagnetic fields.
The duality rotation by $\theta$ is defined by 
\begin{equation}
\mathcal{R}_\theta (\vct{E},\vct{H})
=(\vct{E} \cos\theta +Z_0\vct{H}\sin \theta, -\frac{\vct{E}}{Z_0} \sin\theta +\vct{H}\cos \theta).
  \label{eq:3}
\end{equation}
This transformation is considered to be a rotation with respect to the internal degree of freedom.
The transformation given by Eqs.~(\ref{eq:1}) and (\ref{eq:2}) corresponds to the duality rotation by $\theta=\pi/2$.
The duality rotation mixes the two linear polarized plane waves.
Here, we use tildes to represent the complex amplitudes 
for sinusoidally oscillating fields with angular frequency $\omega$.
For example, a sinusoidally oscillating real-valued scalar field $F$ is represented by 
$F=\tilde{F}\ee^{-\ii\omega t} + \tilde{F}^*\ee^{\ii\omega t}$, where $\tilde{F}$ is the complex 
amplitude and $\tilde{F}^*$ is its complex conjugate.
With this notation,
we have $ \mathcal{R}_\theta (\tilde{\vct{E}}_{\mathrm{LCP}},\tilde{\vct{H}}_{\mathrm{LCP}})  = \ee^{-\ii \theta} (\tilde{\vct{E}}_{\mathrm{LCP}},\tilde{\vct{H}}_{\mathrm{LCP}})$ for a left circularly polarized wave $(\tilde{\vct{E}}_{\mathrm{LCP}},\tilde{\vct{H}}_{\mathrm{LCP}})$ from the point of view of the receiver.
For a right circularly polarized wave $(\tilde{\vct{E}}_{\mathrm{RCP}},\tilde{\vct{H}}_{\mathrm{RCP}})$,
$ \mathcal{R}_\theta (\tilde{\vct{E}}_{\mathrm{RCP}},\tilde{\vct{H}}_{\mathrm{RCP}}) = \ee^{\ii \theta}(\tilde{\vct{E}}_{\mathrm{RCP}},\tilde{\vct{H}}_{\mathrm{RCP}})$
is satisfied.
Therefore, circularly polarized plane waves are eigenstates of $\mathcal{R}_\theta$.

\section{\label{sec:3} Babinet's principle for metasurfaces with resistive elements}
In this section, we derive Babinet's principle for metasurfaces with resistive elements.
In deriving Babinet's principle,
we consider two scattering problems.
We assume that a metasurface is placed in a vacuum for each problem.
\begin{figure}[!tbp]
\includegraphics[width=85mm]{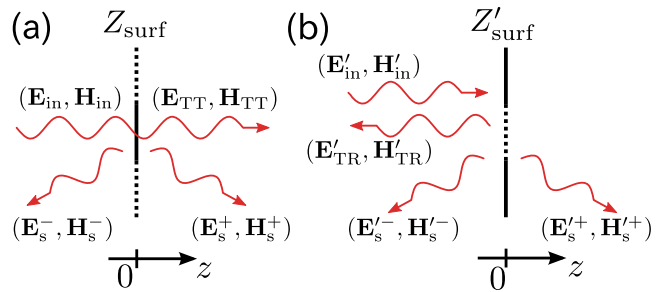}
\caption{\label{fig:two_problems} (Color online) Two problems for Babinet's principle}
\end{figure}
In the first problem [problem~(a)], 
an incident electromagnetic wave $(\vct{E}\sub{in},\vct{H}\sub{in})$
is scattered by the metasurface with 
a surface impedance $Z\sub{surf}$
on the surface $z=0$ [see Fig.~\ref{fig:two_problems}(a)].
Note that $Z\sub{surf}$ is a function of $(x,y)$, but we omit the parameters for simplicity.
The incident wave radiates from the sources in $z\leq 0$.
If there was no metasurface,
the source would produce $(\vct{E}\sub{in},\vct{H}\sub{in})$ in $z\leq 0$ and $(\vct{E}\sub{TT},\vct{H}\sub{TT})$ in $z \geq 0$.
Here, $\mathrm{TT}$ represents a totally transmitted wave.
The incident wave is not restricted to plane waves and 
can even include near-field components.
If there is a metasurface,
surface currents and charges are induced by the incident wave.
They radiate the following scattered fields: $(\vct{E}^{-}\sub{s}, \vct{H}^{-}\sub{s})$ in $z\leq0$ 
and $(\vct{E}^{+}\sub{s}, \vct{H}^{+}\sub{s})$ in $z\geq0$.

Next, we set up the second problem [problem~(b)].
In this problem, an incident wave $(\vct{E}\sub{in}',\vct{H}\sub{in}')$
from sources in $z\leq0$ enters the metasurface at $z=0$ with 
a surface impedance $Z\sub{surf}'$ varying on $z=0$ [see Fig.~\ref{fig:two_problems}(b)].
Here, $(\vct{E}\sub{in}',\vct{H}\sub{in}')$ is defined 
in $z\leq0$. If a perfect electric conductor (PEC) sheet is placed at $z=0$,
the incident wave is totally reflected. This totally reflected wave in $z\leq 0$ is represented by $(\vct{E}\sub{TR}',\vct{H}\sub{TR}')$.
The effect of the metasurface that differs from the PEC sheet
emerges as the remaining fields $(\vct{E}'^\pm\sub{s}, \vct{H}'^\pm\sub{s})$,
where $-$ and $+$ represent the fields in
$z\leq 0$ and $z\geq 0$, respectively.

In general, these two problems are completely distinct. 
If we assume a specific condition for the surface impedances,
the scattering fields of the two problems are related as described in the following theorem.
\begin{theo}\label{theo:1}
If $Z\sub{surf}$ and $Z\sub{surf}'$
satisfy $Z\sub{surf}\, Z'\sub{surf}=\left(Z_0/2\right)^2$ at any point with $z=0$, 
the scattering fields of problem~(b) are given by
 $(\vct{E}'^\pm\sub{s}, \vct{H}'^\pm\sub{s})=\mathcal{R}_{\pm\pi/2} (\vct{E}^\pm\sub{s},\vct{H}^\pm\sub{s})=\pm(Z_0\vct{H}^\pm\sub{s}, -\vct{E}^\pm\sub{s}/Z_0)$
for the incident wave $(\vct{E}\sub{in}',\vct{H}\sub{in}')=
\mathcal{R}_{-\pi/2}(\vct{E}\sub{in},\vct{H}\sub{in})=
(-Z_0\vct{H}\sub{in},\vct{E}\sub{in}/Z_0)$ using the solution of problem~(a).
\end{theo}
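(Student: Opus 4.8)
The plan is to construct a solution of problem~(b) directly out of the data of problem~(a) and then appeal to uniqueness of the scattering solution. The first step is to write out the boundary conditions on the plane $z=0$. A resistive sheet carries only an electric surface current, so across it the tangential electric field is continuous, the tangential magnetic field jumps by the surface current $\vct{J}\sub{s}=\hat{\vct{z}}\times[\vct{H}_t]$, the normal magnetic field is continuous, and Ohm's law on the sheet reads $\vct{E}_t=Z\sub{surf}\vct{J}\sub{s}$. I would write these for the \emph{total} fields and then subtract the reference field that carries no induced current: in problem~(a) this is the smooth wave $(\vct{E}\sub{in},\vct{H}\sub{in})=(\vct{E}\sub{TT},\vct{H}\sub{TT})$, continuous across $z=0$; in problem~(b) it is the PEC reference $(\vct{E}\sub{in}'+\vct{E}\sub{TR}',\vct{H}\sub{in}'+\vct{H}\sub{TR}')$, whose total tangential $\vct{E}$ vanishes on $z=0$ and whose total tangential $\vct{H}$ there equals $2\vct{H}\sub{in}'$ (the standard PEC image relation). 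This recasts both problems as conditions on the scattered fields alone, and it shows, e.g., that the induced current on the PEC of problem~(b) is $-2\,\hat{\vct{z}}\times(\vct{H}\sub{in}')_t$ at $z=0$.

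The structural observation that drives the proof is that a scattered field is radiated by a current sheet sitting in vacuum on $z=0$, hence has a definite parity under reflection in that plane: tangential $\vct{E}$ and normal $\vct{H}$ are even (hence continuous), while tangential $\vct{H}$ and normal $\vct{E}$ are odd. In particular $(\vct{H}^+\sub{s})_t=-(\vct{H}^-\sub{s})_t$ and $(\vct{E}^+\sub{s})_t=(\vct{E}^-\sub{s})_t$ on $z=0$, so $\vct{J}\sub{s}=2\,\hat{\vct{z}}\times(\vct{H}^+\sub{s})_t$ in problem~(a); I would prove the parity statement by Fourier-decomposing the sheet current, each plane-wave component radiating a mirror-symmetric pair of plane waves. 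Now set $(\vct{E}\sub{in}',\vct{H}\sub{in}'):=\mathcal{R}_{-\pi/2}(\vct{E}\sub{in},\vct{H}\sub{in})$ (a legitimate incident wave, since $\mathcal{R}_\theta$ preserves the propagation direction) and define the candidate scattered field by $\mathcal{R}_{\pm\pi/2}(\vct{E}^\pm\sub{s},\vct{H}^\pm\sub{s})=\pm(Z_0\vct{H}^\pm\sub{s},-\vct{E}^\pm\sub{s}/Z_0)$ in $z\gtrless0$. Since $\mathcal{R}_\theta$ is a symmetry of source-free Maxwell's equations, the candidate is an outgoing solution in each half-space; and the sign flip between $\mathcal{R}_{+\pi/2}$ and $\mathcal{R}_{-\pi/2}$, combined with the even/odd parity above, makes its tangential $\vct{E}$ and normal $\vct{H}$ continuous across $z=0$, so the candidate is again the field of a bare electric current sheet --- the correct structure for a resistive metasurface.

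It remains to impose Ohm's law for problem~(b) on this candidate. Using $\vct{H}\sub{in}'=\vct{E}\sub{in}/Z_0$ and the parity relations, the candidate's total tangential field on $z=0$ is $Z_0(\vct{H}^+\sub{s})_t$, and, subtracting the PEC-reference current found above, its surface current is $\vct{J}\sub{s}'=2\,\hat{\vct{z}}\times(\vct{H}'^+\sub{s}-\vct{H}\sub{in}')_t=-(2/Z_0)\,\hat{\vct{z}}\times\left[(\vct{E}^+\sub{s})_t+(\vct{E}\sub{in})_t\right]$. Meanwhile, applying $\hat{\vct{z}}\times$ to problem~(a)'s Ohm's law $(\vct{E}\sub{in})_t+(\vct{E}^+\sub{s})_t=2Z\sub{surf}\,\hat{\vct{z}}\times(\vct{H}^+\sub{s})_t$ and using $\hat{\vct{z}}\times(\hat{\vct{z}}\times\vct{v})=-\vct{v}$ for tangential $\vct{v}$ gives $\hat{\vct{z}}\times\left[(\vct{E}\sub{in})_t+(\vct{E}^+\sub{s})_t\right]=-2Z\sub{surf}(\vct{H}^+\sub{s})_t$. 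Substituting, the requirement $(\vct{E}'^+\sub{s})_t=Z\sub{surf}'\vct{J}\sub{s}'$ collapses to $Z_0(\vct{H}^+\sub{s})_t=(4Z\sub{surf}Z\sub{surf}'/Z_0)(\vct{H}^+\sub{s})_t$, i.e.\ exactly to $Z\sub{surf}Z\sub{surf}'=(Z_0/2)^2$. Under that hypothesis the candidate solves problem~(b) with incident wave $\mathcal{R}_{-\pi/2}(\vct{E}\sub{in},\vct{H}\sub{in})$, and uniqueness of the scattering problem under the radiation condition (ensured by the resistive losses) identifies it with the genuine solution. The conceptual ingredients --- the duality symmetry and the reflection parity of a current-sheet field --- are short; the main obstacle is the bookkeeping of the two independent factors of $2$ (one from the even/odd symmetry turning $[\vct{H}_t]$ into $2(\vct{H}^+\sub{s})_t$, the other from the PEC reference contributing $2\vct{H}\sub{in}'$) together with the powers of $Z_0$ and the signs, so that the product of impedances comes out to be $(Z_0/2)^2$ rather than some other constant.
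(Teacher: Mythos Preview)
Your proposal is correct and follows essentially the same route as the paper: both set up the impedance boundary condition for problem~(a), exploit the mirror parity of the current-sheet radiation ($\mathcal{P}\vct{E}_s^+=\mathcal{P}\vct{E}_s^-$, $\mathcal{P}\vct{H}_s^+=-\mathcal{P}\vct{H}_s^-$), define the candidate scattered field for problem~(b) via the duality rotations $\mathcal{R}_{\pm\pi/2}$, and then verify that the impedance condition on the complementary sheet is satisfied precisely when $Z\sub{surf}Z'\sub{surf}=(Z_0/2)^2$. Your version is somewhat more explicit in justifying the parity (via Fourier decomposition of the sheet current) and in invoking uniqueness at the end, both of which the paper leaves implicit, but the skeleton of the argument is the same.
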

\begin{proof}
Here, we define a unit vector $\vct{e}_z$ parallel to the $z$ axis, 
and the projection operator $\mathcal{P}=-\vct{e}_z\times\vct{e}_z\times$,
which eliminates $z$-components of vectors.
First, we consider problem~(a).
The scattered fields $(\vct{E}^\pm\sub{s}, \vct{H}^\pm\sub{s})$ are symmetric with respect to  $z=0$.
Then, $\mathcal{P} \vct{E}\sub{s}^+ =\mathcal{P} \vct{E}\sub{s}^-$ and $\mathcal{P} \vct{H}\sub{s}^+ =-\mathcal{P} \vct{H}\sub{s}^-$ are satisfied on $z=0$.
The electric boundary condition $\mathcal{P}(\vct{E}\sub{TT}+\vct{E}^{+}\sub{s})=\mathcal{P}(\vct{E}\sub{in}+\vct{E}^{-}\sub{s})$ on $z=0$ 
is automatically satisfied. 
Another boundary condition on $z=0$ is given by 
$\mathcal{P}(\vct{E}\sub{in}+\vct{E}^-\sub{s})=Z\sub{surf}\,{\vct{e}_z}\times(\vct{H}^+\sub{s}-\vct{H}^-\sub{s})$.
With $\mathcal{P}\vct{H}^{+}\sub{s}=-\mathcal{P}\vct{H}^{-}\sub{s}$, 
we obtain the following equation for $z=0$:
\begin{equation}
 \mathcal{P}(\vct{E}\sub{in}+\vct{E}^-\sub{s})=2Z\sub{surf}\,{\vct{e}_z}\times\vct{H}^+\sub{s}.  \label{eq:4}
\end{equation}

In problem~(b), we show that the fields $(\vct{E}'^\pm\sub{s},\vct{H}'^\pm\sub{s})$
defined by $\mathcal{R}_{\pm\pi/2} (\vct{E}^\pm\sub{s},\vct{H}^\pm\sub{s})$ 
satisfy all boundary conditions for the incident wave $(\vct{E}\sub{in}',\vct{H}\sub{in}')=
\mathcal{R}_{-\pi/2}(\vct{E}\sub{in},\vct{H}\sub{in})$.
The fields $(\vct{E}'^\pm\sub{s},\vct{H}'^\pm\sub{s})=\pm(Z_0\vct{H}^\pm\sub{s}, -\vct{E}^\pm\sub{s}/Z_0)$ are also symmetric with respect to $z=0$.
From $\mathcal{P} \vct{E}'\sub{in}=-\mathcal{P}\vct{E}'\sub{TR}$ and
$\mathcal{P} \vct{E}'^-\sub{s} = \mathcal{P}\vct{E}'^+\sub{s}$ on $z=0$,
the electric boundary condition $\mathcal{P}(\vct{E}'\sub{in}+\vct{E}'\sub{TR}+\vct{E}'^{-}\sub{s})=\mathcal{P}\vct{E}'^+\sub{s}$ is satisfied.
Additionally, the following boundary condition should be satisfied on $z=0$:
\begin{align}
 \mathcal{P}\vct{E}'^+\sub{s}&=
Z\sub{surf}'\,{\vct{e}_z}\times(\vct{H}'^+\sub{s}-\vct{H}'^-\sub{s}-\vct{H}'\sub{in}-\vct{H}'\sub{TR})
\nonumber \\
&=-2Z\sub{surf}'\,{\vct{e}_z}\times(\vct{H}\sub{in}'+\vct{H}'^-\sub{s}),
  \label{eq:5}
\end{align}
where we use $\mathcal{P} \vct{H}'^+\sub{s}=-\mathcal{P}  \vct{H}'^-\sub{s}$
and
$\mathcal{P}\vct{H}'\sub{in}=\mathcal{P}\vct{H}'\sub{TR}$ on $z=0$
(the derivation of $\mathcal{P}\vct{H}'\sub{in}=\mathcal{P}\vct{H}'\sub{TR}$ 
is shown in Appendix~\ref{append1}).
Operating with $\vct{e}_z\times$ on Eq.~(\ref{eq:5}) and comparing with Eq.~(\ref{eq:4}),
we have $Z\sub{surf}\, Z\sub{surf}'=\left(Z_0/2\right)^2$.
Thus all boundary conditions are satisfied for problem~(b) 
with $Z\sub{surf}'={Z_0}^2/(4Z\sub{surf})$.
\end{proof}
For the case of $Z\sub{surf}=\infty$ (hole), the complementary surface is 
PEC with $Z\sub{surf}'=0$, and vice versa.
Therefore, the above theorem includes the standard Babinet's principle.
The extensions for tensor impedances\cite{Baum1974} and lumped elements\cite{Moore1993} have also been investigated.

Next, we discuss the relationship of the transmission and reflection 
coefficients in problems~(a) and (b).
From here on, we assume that all fields oscillate sinusoidally with angular frequency $\omega$
and are represented by complex amplitudes.
We consider a periodic metasurface with lattice vectors $\vct{a}_1$ and
$\vct{a}_2$. Physically, a metasurface without periodicity can be regarded as 
$|\vct{a}_1|,\ |\vct{a}_2| \rightarrow \infty$. This corresponds to the
transition from box quantization to free space quantization in
quantum mechanics.
The reciprocal vectors are represented by $\vct{b}_1$ and $\vct{b}_2$ satisfying
$\vct{a}_i\cdot\vct{b}_j=2\pi \delta_{ij}$ ($\delta_{ij}$ is the Kronecker delta).
Additionally, we assume that the incident wave is a plane wave 
$(\tilde{\vct{E}}\sub{in}, \tilde{\vct{H}}\sub{in})=
(\check{\vct{E}}_0\ee^{\ii \vct{k}_0 \cdot \vct{x}},
\check{\vct{H}}_0\ee^{\ii \vct{k}_0 \cdot \vct{x}} )$
with $\check{{\vct{H}}}_0={Z_0}^{-1} \vct{k}_0 \times \check{\vct{E}}_0/|\vct{k}_0|$
for a wave vector $\vct{k}_0$. Here we use the checkmark symbol in order to express
complex amplitudes for a plane wave with a definite wave vector.

In problem~(a), the scattered wave 
$(\tilde{\vct{E}}^+\sub{s}, \tilde{\vct{H}}^+\sub{s})$ on $z=0$ 
has Fourier components with
the in-plane wave vector $\underline{\vct{k}}_{pq}:= p \vct{b}_1+q\vct{b}_2+\mathcal{P} \vct{k}_0$ for $(p,q)\in \mathbb{Z}^2$. 
In this paper, we focus on the 0th-order modes with $(p,q)=(0,0)$
in order to simplify the notation.
The general case is summarized in Appendix~\ref{append2}.
We decompose the 0th-order complex fields of problem~(a) in $z\geq 0$
as $\sum_{\alpha=1,2} t_{\alpha}
(\tilde{\vct{E}}^+_{\alpha}, \tilde{\vct{H}}^+_{\alpha})$
with complex transmission coefficients $t_{\alpha}$,
where we define $(\tilde{\vct{E}}^+_{1},\tilde{\vct{H}}^+_{1}):=
(\tilde{\vct{E}}_{\mathrm{TT}}, \tilde{\vct{H}}_{\mathrm{TT}})$, 
and its perpendicular polarization state $(\tilde{\vct{E}}^+_{2},\tilde{\vct{H}}^+_{2})$.
The mode $(\tilde{\vct{E}}^+_{2},\tilde{\vct{H}}^+_{2})$ is normalized to 
carry the same power flow of $(\tilde{\vct{E}}^+_{1},\tilde{\vct{H}}^+_{1})$.
We also define $(\tilde{\vct{E}}^-_{\alpha}, \tilde{\vct{H}}^-_{\alpha})$ as 
the mirror symmetric field of $(\tilde{\vct{E}}^+_{\alpha}, \tilde{\vct{H}}^+_{\alpha})$ with respect to $z=0$. In $z\leq 0$, the 0th-order field is represented by 
$(\tilde{\vct{E}}\sub{in}, \tilde{\vct{H}}\sub{in})+\sum_{\alpha=1,2} r_{\alpha}
(\tilde{\vct{E}}^-_{\alpha}, \tilde{\vct{H}}^-_{\alpha})$ with complex reflection coefficients $r_{\alpha}$.
For problem~(b), 
we define
 $(\tilde{\vct{E}}'^\pm_{\alpha}, \tilde{\vct{H}}'^\pm_{\alpha}):=
\mathcal{R}_{\mp \pi/2} (\tilde{\vct{E}}^\pm_{\alpha}, \tilde{\vct{H}}^\pm_{\alpha})$.
The 0th-order fields are written as
 $\sum_{\alpha=1,2} t'_{\alpha}(\tilde{\vct{E}}'^+_{\alpha}, \tilde{\vct{H}}'^+_{\alpha})$
in $z \geq 0$, and 
$(\tilde{\vct{E}}'\sub{in}, \tilde{\vct{H}}'\sub{in})+\sum_{\alpha=1,2}r'_{\alpha}(\tilde{\vct{E}}'^-_{\alpha}, \tilde{\vct{H}}'^-_{\alpha})$
in $z\leq 0$.

Now we formulate Babinet's principle for complex coefficients as follows:
\begin{theo}
\label{theo:2}
The coefficients of the two problems are related as
$t_{1}+t_{1}'=1$,
$r_{1}+r'_{1}=-1$,
and 
$t_{2}+t'_{2}=0$,
$r_{2}+r'_{2}=0$.
\end{theo}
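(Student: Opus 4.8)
The plan is to use Theorem~\ref{theo:1} as the central tool, relating problem~(b) to problem~(a), and then to match up the 0th-order mode expansions on both sides of $z=0$. First I would record what the duality rotation $\mathcal{R}_{\mp\pi/2}$ does to the building-block modes: by construction $(\tilde{\vct{E}}'^\pm_\alpha,\tilde{\vct{H}}'^\pm_\alpha)=\mathcal{R}_{\mp\pi/2}(\tilde{\vct{E}}^\pm_\alpha,\tilde{\vct{H}}^\pm_\alpha)$, and similarly $(\tilde{\vct{E}}'\sub{in},\tilde{\vct{H}}'\sub{in})=\mathcal{R}_{-\pi/2}(\tilde{\vct{E}}\sub{in},\tilde{\vct{H}}\sub{in})$. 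Theorem~\ref{theo:1} tells us that the \emph{scattered} field of problem~(b) is exactly $\mathcal{R}_{\pm\pi/2}$ applied to the scattered field of problem~(a) on the two sides. So the plan is to write the total 0th-order field of problem~(a) in $z\ge 0$ as $\sum_\alpha t_\alpha(\tilde{\vct{E}}^+_\alpha,\tilde{\vct{H}}^+_\alpha)$, note that the scattered part is this minus the totally-transmitted background $(\tilde{\vct{E}}_{\mathrm{TT}},\tilde{\vct{H}}_{\mathrm{TT}})=(\tilde{\vct{E}}^+_1,\tilde{\vct{H}}^+_1)$, hence the 0th-order scattered field in $z\ge0$ is $(t_1-1)(\tilde{\vct{E}}^+_1,\tilde{\vct{H}}^+_1)+t_2(\tilde{\vct{E}}^+_2,\tilde{\vct{H}}^+_2)$.

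Next I would apply $\mathcal{R}_{+\pi/2}$ to this expression. Since $\mathcal{R}_{+\pi/2}$ is linear, the 0th-order scattered field of problem~(b) in $z\ge 0$ equals $(t_1-1)(\tilde{\vct{E}}'^+_1,\tilde{\vct{H}}'^+_1)+t_2(\tilde{\vct{E}}'^+_2,\tilde{\vct{H}}'^+_2)$, using the very definition of the primed modes with the upper sign. On the other hand, the defining expansion of problem~(b) says the 0th-order transmitted field there is $\sum_\alpha t'_\alpha(\tilde{\vct{E}}'^+_\alpha,\tilde{\vct{H}}'^+_\alpha)$, and the corresponding scattered part is this minus the appropriate background. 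Here I must be careful about what "background" means on the transmitted side of problem~(b): in problem~(b) the reference is the PEC-reflected wave, which lives only in $z\le 0$, so in $z\ge0$ there is no background and the full 0th-order field $\sum_\alpha t'_\alpha(\tilde{\vct{E}}'^+_\alpha,\tilde{\vct{H}}'^+_\alpha)$ \emph{is} the scattered field. Equating the two expressions for this scattered field and using the linear independence of $(\tilde{\vct{E}}'^+_1,\tilde{\vct{H}}'^+_1)$ and $(\tilde{\vct{E}}'^+_2,\tilde{\vct{H}}'^+_2)$ gives $t'_1=t_1-1$, i.e.\ $t_1+t'_1=1$, and $t'_2=t_2$, i.e.\ $t_2+t'_2=0$... wait, that sign is wrong, so I would instead keep careful track: applying $\mathcal{R}_{+\pi/2}$ and matching should yield $t'_\alpha$ equal to the problem~(a) scattered coefficients, giving $t_1+t'_1=1$ and $t_2+t'_2=0$ once the conventions are lined up.

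For the reflection relations I would repeat the argument in $z\le 0$, now with the lower sign $\mathcal{R}_{-\pi/2}$. In problem~(a) the 0th-order field in $z\le0$ is $(\tilde{\vct{E}}\sub{in},\tilde{\vct{H}}\sub{in})+\sum_\alpha r_\alpha(\tilde{\vct{E}}^-_\alpha,\tilde{\vct{H}}^-_\alpha)$, so the scattered part is $\sum_\alpha r_\alpha(\tilde{\vct{E}}^-_\alpha,\tilde{\vct{H}}^-_\alpha)$. Applying $\mathcal{R}_{-\pi/2}$ gives $\sum_\alpha r_\alpha(\tilde{\vct{E}}'^-_\alpha,\tilde{\vct{H}}'^-_\alpha)$ for the 0th-order scattered field of problem~(b) in $z\le0$. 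In problem~(b), the 0th-order field in $z\le0$ is $(\tilde{\vct{E}}'\sub{in},\tilde{\vct{H}}'\sub{in})+\sum_\alpha r'_\alpha(\tilde{\vct{E}}'^-_\alpha,\tilde{\vct{H}}'^-_\alpha)$, and the reference (PEC-reflected) wave in $z\le0$ is the totally reflected version of $(\tilde{\vct{E}}'\sub{in},\tilde{\vct{H}}'\sub{in})$; the scattered part is therefore $(\tilde{\vct{E}}'\sub{in}+\sum_\alpha r'_\alpha\tilde{\vct{E}}'^-_\alpha)-(\text{PEC-reflected})$. The key sub-step is to express the PEC-reflected wave of $(\tilde{\vct{E}}'\sub{in},\tilde{\vct{H}}'\sub{in})$ in the $(\tilde{\vct{E}}'^-_\alpha,\tilde{\vct{H}}'^-_\alpha)$ basis: since the PEC reflection of $(\tilde{\vct{E}}\sub{in},\tilde{\vct{H}}\sub{in})$ in problem~(a)'s language contributes a known coefficient (the TR wave being essentially $-1$ times the transmitted-wave mode reflected), applying $\mathcal{R}_{-\pi/2}$ carries this to a coefficient $-1$ (or $+1$, to be fixed by convention) times $(\tilde{\vct{E}}'^-_1,\tilde{\vct{H}}'^-_1)$. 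Matching coefficients of $(\tilde{\vct{E}}'^-_1,\tilde{\vct{H}}'^-_1)$ and $(\tilde{\vct{E}}'^-_2,\tilde{\vct{H}}'^-_2)$ then yields $r_1+r'_1=-1$ and $r_2+r'_2=0$.

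The main obstacle will be bookkeeping rather than mathematics: pinning down the signs and the precise meaning of the "scattered field" references in problem~(b) (PEC-reflected on the $z\le0$ side, zero on the $z\ge0$ side), and verifying that $\mathcal{R}_{\pm\pi/2}$ maps the totally-transmitted mode $(\tilde{\vct{E}}^+_1,\tilde{\vct{H}}^+_1)$ and its incident counterpart consistently onto the primed reference waves so that the inhomogeneous terms $+1$ and $-1$ come out with the stated signs. Once the reference-wave identifications are made and linear independence of the two polarization modes $\alpha=1,2$ (guaranteed by the equal-power normalization and perpendicular polarization) is invoked, the four relations follow by reading off coefficients.
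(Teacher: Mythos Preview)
Your approach is essentially the paper's own: write the 0th-order scattered field of problem~(a) in $z\ge0$ as $(t_1-1)(\tilde{\vct{E}}^+_1,\tilde{\vct{H}}^+_1)+t_2(\tilde{\vct{E}}^+_2,\tilde{\vct{H}}^+_2)$, use Theorem~\ref{theo:1} to identify it (after a duality rotation) with the 0th-order scattered field of problem~(b), which is simply $\sum_\alpha t'_\alpha(\tilde{\vct{E}}'^+_\alpha,\tilde{\vct{H}}'^+_\alpha)$ since the PEC reference contributes nothing in $z\ge0$, and read off coefficients; then repeat in $z\le0$.

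The one place where you waver --- the sign --- is exactly the point to nail down. Theorem~\ref{theo:1} gives $(\tilde{\vct{E}}'^+_{\mathrm{s}},\tilde{\vct{H}}'^+_{\mathrm{s}})=\mathcal{R}_{+\pi/2}(\tilde{\vct{E}}^+_{\mathrm{s}},\tilde{\vct{H}}^+_{\mathrm{s}})$, but the basis modes are defined with the \emph{opposite} sign, $(\tilde{\vct{E}}'^+_\alpha,\tilde{\vct{H}}'^+_\alpha)=\mathcal{R}_{-\pi/2}(\tilde{\vct{E}}^+_\alpha,\tilde{\vct{H}}^+_\alpha)$. Hence $\mathcal{R}_{+\pi/2}(\tilde{\vct{E}}^+_\alpha,\tilde{\vct{H}}^+_\alpha)=\mathcal{R}_{\pi}(\tilde{\vct{E}}'^+_\alpha,\tilde{\vct{H}}'^+_\alpha)=-(\tilde{\vct{E}}'^+_\alpha,\tilde{\vct{H}}'^+_\alpha)$, and the scattered field of problem~(b) becomes $-(t_1-1)(\tilde{\vct{E}}'^+_1,\tilde{\vct{H}}'^+_1)-t_2(\tilde{\vct{E}}'^+_2,\tilde{\vct{H}}'^+_2)$. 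Matching with $\sum_\alpha t'_\alpha(\tilde{\vct{E}}'^+_\alpha,\tilde{\vct{H}}'^+_\alpha)$ now gives $t'_1=1-t_1$ and $t'_2=-t_2$, i.e.\ the stated relations. The same $\pm\pi/2$ mismatch produces the extra minus sign on the $z\le0$ side; together with $(\tilde{\vct{E}}'_{\mathrm{TR}},\tilde{\vct{H}}'_{\mathrm{TR}})=-(\tilde{\vct{E}}'^-_1,\tilde{\vct{H}}'^-_1)$ (from Appendix~\ref{append1}), this yields $r_1+r'_1=-1$ and $r_2+r'_2=0$. So your plan is right; only this one bookkeeping step needed to be made explicit.
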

\begin{proof}
For problem~(a), the 0th-order component of the scattered field in $z\geq 0$ is given by
\begin{equation}
(\tilde{\vct{E}}^+\sub{s0},\tilde{\vct{H}}^+\sub{s0})= - (\tilde{\vct{E}}^+_{1}, \tilde{\vct{H}}^+_{1}) + \sum_{\alpha=1,2}
t_{\alpha} (\tilde{\vct{E}}^+_{\alpha}, \tilde{\vct{H}}^+_{\alpha}).  \label{eq:6}
\end{equation}
In problem~(b), the 0th-order component of $(\tilde{\vct{E}}'^+\sub{s},\tilde{\vct{H}}'^+\sub{s})$
is 
\begin{equation}
(\tilde{\vct{E}}'^+\sub{s0},\tilde{\vct{H}}'^+\sub{s0})= \sum_{\alpha=1,2} t'_{\alpha}(\tilde{\vct{E}}'^+_{\alpha}, \tilde{\vct{H}}'^+_{\alpha}).  \label{eq:7}
\end{equation}
Applying $\mathcal{R}_{-\pi/2}$ to Eq.~(\ref{eq:7}),
we have 
\begin{equation}
 (\tilde{\vct{E}}^+\sub{s0},\tilde{\vct{H}}^+\sub{s0})= -\sum_{\alpha=1,2} t'_{\alpha}(\tilde{\vct{E}}^+_{\alpha}, \tilde{\vct{H}}^+_{\alpha}),  \label{eq:8}
\end{equation}
where we use 
$(\vct{E}'^+\sub{s0}, \vct{H}'^+\sub{s0})=\mathcal{R}_{\pi/2} (\vct{E}^+\sub{s0},\vct{H}^+\sub{s0})$ and 
$(\vct{E}'^+_\alpha, \vct{H}'^+_\alpha)=\mathcal{R}_{-\pi/2} (\vct{E}^+_\alpha, \vct{H}^+_\alpha)$.
Comparing Eq.~(\ref{eq:8}) with Eq.~(\ref{eq:6}),
we obtain $t_{1}+t_{1}'=1$ and $t_{2}+t_{2}'=0$.
The remaining equations are derived from a similar discussion for $z\leq 0$.
\end{proof}

\section{Self-complementary metasurfaces \label{sec:4}}
For a metasurface with a surface impedance $Z\sub{surf}$,
we can create the complementary metasurface with $Z\sub{surf}'={Z_0}^2/(4Z\sub{surf})$.
This operation is called an {\it impedance inversion about $Z_0/2$}.
Two metasurfaces are congruent if
one can be transformed into the other by a combination of translations, 
rotations and reflections.
When a metasurface is congruent to its complementary one,
we say that it is {\it self-complementary}.
We emphasize that the self-complementarity is
not the same as the point-group symmetry.
Several examples of self-complementary metasurfaces are
shown in Fig.~\ref{fig:classification}.

\begin{figure*}[!htbp]
\includegraphics[width=140mm]{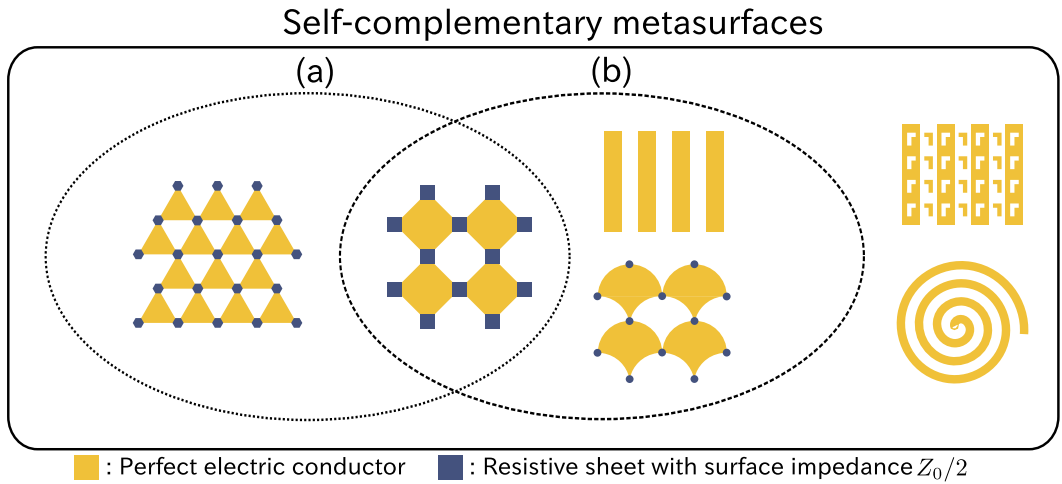}
\caption{\label{fig:classification} (Color online) Examples of self-complementary metasurfaces.
The class of self-complementary metasurfaces includes
 two specific subclasses with (a) $n$-fold rotational symmetry ($n\geq 3$)
 and (b) translational self-complementarity.}
\end{figure*}

For a left circularly polarized incident wave,
we define $t\sub{LL}:= t_{1}$, $t\sub{RL}:=t_{2}$, and
$r\sub{RL}:=r_{1}$, $r\sub{LL}:=r_{2}$.
We also use $t\sub{RR}:=t_{1}$,  $t\sub{LR}:=t_{2}$, and
$r\sub{LR}:=r_{1}$, $r\sub{RR}:=r_{2}$
for a right circularly polarized incident wave.
From reciprocity and the mirror symmetry of a metasurface with respect to $z=0$,
the following theorem is derived.
\begin{theo}
\label{theo:3}
In the case of normal incidence of a circularly 
polarized plane wave onto a metasurface, 
$t\sub{RR}=t\sub{LL}$ and 
$r\sub{LR}=r\sub{RL}$ are satisfied.
\end{theo}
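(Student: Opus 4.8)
The plan is to reduce the claim to the transpose symmetry of the $2\times 2$ transmission and reflection matrices written in a real linear-polarization basis, and then pass to the circular basis. Work at normal incidence, $\vct k_0=k\vct e_z$, with the metasurface on $z=0$. Fix transverse basis vectors $\vct e_x,\vct e_y$ and let $T,R$ be the matrices taking the in-plane complex amplitude of a plane wave incident from $z<0$ to those of the 0th-order transmitted and reflected waves; let $T',R'$ be the analogous matrices for incidence from $z>0$. Everything follows from two inputs: the metasurface is a reciprocal system (its resistive elements and the surrounding vacuum are reciprocal media), and it is invariant under the mirror reflection $\sigma:(x,y,z)\mapsto(x,y,-z)$, which is automatic since the sheet lies in the fixed plane $z=0$.

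First I would use $\sigma$. It fixes the metasurface, preserves the in-plane components of $\vct E$ (while reversing those of $\vct H$ and flipping $E_z$), turns a wave propagating toward $+z$ into one propagating toward $-z$, and Maxwell's equations together with the surface-impedance boundary condition are covariant under it. Hence applying $\sigma$ to any solution of the problem with incidence from $z<0$ produces a solution of the problem with incidence from $z>0$ having the same in-plane amplitudes, so $T'=T$ and $R'=R$.

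Next I would invoke reciprocity. Lorentz reciprocity pairs an incoming channel with the outgoing channel obtained by reversing its propagation direction (not by complex conjugation); in the \emph{real} basis $\{\vct e_x,\vct e_y\}$, and at zero transverse wave vector so that no wave-vector reversal enters, this renders the full scattering matrix symmetric, i.e.\ $T'=T^{\top}$ and $R=R^{\top}$. Combined with the previous step this gives $T=T^{\top}$ and $R=R^{\top}$: both matrices are symmetric in the linear basis.

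Finally I would pass to the circular basis. Setting $\vct c_\pm:=(\vct e_x\pm\ii\vct e_y)/\sqrt2$ (so that $\vct c_\pm^{\,*}=\vct c_\mp$ and $\{\vct c_+,\vct c_-\}$ is Hermitian-orthonormal), a forward LCP/RCP wave carries the polarization vector $\vct c_+$/$\vct c_-$, whereas a reflected (backward) LCP/RCP wave, being by definition the $z=0$ mirror image of a forward RCP/LCP wave so that reflection flips helicity, carries $\vct c_-$/$\vct c_+$. Projecting with the Hermitian product then yields, up to the common mode normalization, $t\sub{LL}=\vct c_-^{\top}T\vct c_+$, $t\sub{RR}=\vct c_+^{\top}T\vct c_-$, $r\sub{RL}=\vct c_-^{\top}R\vct c_+$, and $r\sub{LR}=\vct c_+^{\top}R\vct c_-$. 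Since each is a scalar, $t\sub{RR}=(\vct c_+^{\top}T\vct c_-)^{\top}=\vct c_-^{\top}T^{\top}\vct c_+=\vct c_-^{\top}T\vct c_+=t\sub{LL}$ by $T=T^{\top}$, and likewise $r\sub{LR}=r\sub{RL}$ by $R=R^{\top}$. The step that needs the most care, and which I would flag as the main obstacle, is casting reciprocity in exactly this form (the ``reverse the propagation, do not conjugate'' pairing that makes it an honest transpose in the real basis) and keeping the handedness bookkeeping under $\sigma$ consistent, especially the helicity flip separating forward circular modes from reflected ones; the mirror-symmetry half and the basis change are then routine.
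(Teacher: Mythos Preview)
Your argument is correct and rests on the same two physical ingredients the paper uses (Lorentz reciprocity and the front--back mirror symmetry of a sheet at $z=0$), but you organize them differently. The paper applies the Lorentz reciprocity integral directly to a pair of circularly polarized solutions---LCP incident from below and a wave incident from above which, by the mirror symmetry, is equivalent to RCP from below---and reads off $t\sub{RR}=t\sub{LL}$ from the cross-product identity $\vct{e}_+\times\vct{e}_-=-\ii\,\vct{e}_z$. For the reflection part the paper does \emph{not} reuse reciprocity; instead it uses the thin-sheet relation coming from continuity of the tangential electric field, $1+r_1=t_1$, applied to both helicities. By contrast, you first establish $T=T^{\top}$ and $R=R^{\top}$ in the real linear basis (reciprocity gives $T'=T^{\top}$ and $R=R^{\top}$; the mirror gives $T'=T$) and then pass to the circular basis. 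Your route is a bit more systematic---one mechanism handles both $t$ and $r$---and the step $R=R^{\top}$ in fact needs only reciprocity, not the thin-sheet property, so your reflection argument is marginally more general. The paper's route is shorter and more hands-on for this specific geometry, and its use of $1+r_1=t_1$ highlights a structural identity of thin metasurfaces that you do not exploit. Your self-flagged caveat about casting reciprocity as a plain transpose in the real basis (no conjugation) is the right place to be careful; once that is stated cleanly, the rest is routine.
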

\begin{proof}
We consider two situations.
In the first, the incident wave 
is a left circularly polarized wave 
$(\tilde{\vct{E}}\sub{in},\tilde{\vct{H}}\sub{in})=(\check{E}_0 \vct{e}_+ ,-\ii \check{H}_0 \vct{e}_+  )\ee^{\ii k_0 z }$ from $z\leq 0$, where $\check{H}_0= \check{E}_0/Z_0$
and $\vct{e}_\pm:=(\vct{e}_{x} \pm \ii \vct{e}_{y})/\sqrt{2}$ with 
unit vectors $\vct{e}_x$ and $\vct{e}_y$ along $x$ and $y$ axes.
The total field is represented by $(\tilde{\vct{E}}\sub{f}, \tilde{\vct{H}}\sub{f})$.
In the second situation, an incident wave from $z\geq0$ is 
 $(\tilde{\vct{E}}\sub{in},\tilde{\vct{H}}\sub{in})=(\check{E}_0 \vct{e}_-  ,\ii \check{H}_0 \vct{e}_- ) \ee^{-\ii k_0 z }$, and the total field is denoted by $(\tilde{\vct{E}}\sub{b}, \tilde{\vct{H}}\sub{b})$.
If we perform the coordinate transformation $z \rightarrow -z$, 
the second situation can be transformed to the scattering problem for a right circularly 
polarized incident wave from $z\leq 0$,
because of the symmetry between the front and back of the metasurface.
Then, $t_1$ of the second situation is equivalent to
$t\sub{RR}$.


We represent the unit cell on $z = 0$ by $U$, and consider
$V = U \times [h/2,h/2]$ with $h>0$.
For the normal incidence, we can impose periodic boundary conditions 
on two pairs of opposite faces of $\partial U \times [h/2,h/2]$.
From the Lorentz reciprocity theorem\cite{Collin1991}
\begin{equation} 
 \int_{\partial V} (\tilde{\vct{E}}\sub{f}\times \tilde{\vct{H}}\sub{b}
-\tilde{\vct{E}}\sub{b}\times \tilde{\vct{H}}\sub{f})\cdot \dd \vct{S} =0  \label{eq:9}
\end{equation}
and $\vct{e}_+\times\vct{e}_-=-\ii \vct{e}\sub{z}$,
we obtain $t\sub{RR}=t\sub{LL}$.
Because electric fields are continuous on $z=0$,
$1+r\sub{LR}=t\sub{RR}$ and $1+r\sub{RL}=t\sub{LL}$
are satisfied. Then, $r\sub{LR}=r\sub{RL}$ is proved.
\end{proof}

By using Theorems~\ref{theo:2} and \ref{theo:3}, 
we can arrive at the following theorem.
\begin{theo}
\label{theo:4}
In the case of normal incidence of a circularly 
polarized plane wave onto a self-complementary metasurface, 
$t\sub{RR}=t\sub{LL}=1/2$ and
$r\sub{LR}=r\sub{RL}=-1/2$ are satisfied.
\end{theo}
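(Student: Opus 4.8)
The plan is to combine Theorem~\ref{theo:3} with the Babinet relations of Theorem~\ref{theo:2}, exploiting the fact that a self-complementary metasurface is its own complement up to a congruence. First I would fix the incident wave to be, say, a left circularly polarized plane wave at normal incidence, so that the relevant coefficients are $t\sub{LL}=t_{1}$, $t\sub{RL}=t_{2}$, $r\sub{RL}=r_{1}$, $r\sub{LL}=r_{2}$. Applying Theorem~\ref{theo:2} directly gives $t_{1}+t_{1}'=1$, $r_{1}+r_{1}'=-1$, $t_{2}+t_{2}'=0$, $r_{2}+r_{2}'=0$, where the primed quantities refer to the complementary metasurface with $Z\sub{surf}'={Z_0}^2/(4Z\sub{surf})$ and the dual incident wave $\mathcal{R}_{-\pi/2}(\tilde{\vct{E}}\sub{in},\tilde{\vct{H}}\sub{in})$. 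The key observation is that $\mathcal{R}_{-\pi/2}$ acting on a left circularly polarized wave returns (up to a phase $\ee^{\ii\pi/2}$) the same left circularly polarized wave, so the dual problem is again a normal-incidence LCP scattering problem, now on the complementary metasurface; the overall phase is immaterial since $t$ and $r$ are defined as ratios relative to the incident mode.

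Next I would use the self-complementarity. Since the complementary metasurface is congruent to the original one, and the relevant congruence (a combination of translations, rotations about the $z$-axis, and reflections that preserve the $z=0$ plane and the normal-incidence geometry) maps the scattering problem onto itself, the primed coefficients must coincide with the unprimed ones: $t_{1}'=t_{1}=t\sub{LL}$, $t_{2}'=t_{2}=t\sub{RL}$, and similarly $r_{1}'=r_{1}$, $r_{2}'=r_{2}$. I should be a little careful here, because a congruence involving a reflection or a rotation can permute or rephase the two polarization channels; for the co-polarized channel $\alpha=1$ this is harmless, but for the cross-polarized channel one might worry about a sign. However, the conclusion $t_{2}=t_{2}'$ combined with $t_{2}+t_{2}'=0$ forces $t_{2}=t\sub{RL}=0$ regardless of such a phase (any $\ee^{\ii\phi}$ still gives $t_{2}=0$), and likewise $r_{2}=r\sub{LL}=0$. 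For $\alpha=1$, $t_{1}=t_{1}'$ together with $t_{1}+t_{1}'=1$ gives $t_{1}=t\sub{LL}=1/2$, and $r_{1}=r_{1}'$ with $r_{1}+r_{1}'=-1$ gives $r_{1}=r\sub{RL}=-1/2$.

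Finally I would invoke Theorem~\ref{theo:3}, which for normal incidence of a circularly polarized wave gives $t\sub{RR}=t\sub{LL}$ and $r\sub{LR}=r\sub{RL}$. Hence $t\sub{RR}=t\sub{LL}=1/2$ and $r\sub{LR}=r\sub{RL}=-1/2$, which is the claim. (As a consistency check one can note $1+r\sub{RL}=t\sub{LL}$ from the continuity of the tangential electric field on $z=0$ used in the proof of Theorem~\ref{theo:3}, and indeed $1+(-1/2)=1/2$.)

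The main obstacle I anticipate is the careful justification that self-complementarity actually identifies the primed and unprimed coefficients \emph{for the same incident polarization and geometry}. One must check that the congruence realizing the self-complementarity can be taken to fix the plane $z=0$ and the propagation direction of the normally incident wave — i.e.\ it is generated by in-plane translations, rotations about $\vct{e}_z$, and mirrors in planes containing $\vct{e}_z$ (possibly composed with the $z\to-z$ flip, which by the front/back symmetry used in Theorem~\ref{theo:3} does not change the coefficients) — and then to track how such an isometry acts on $\vct{e}_\pm$. Since these operations at worst multiply a circular polarization state by a phase or swap $\vct{e}_+\leftrightarrow\vct{e}_-$, the argument above shows the final numerical values are unaffected; but spelling this out cleanly is the delicate part, and it is presumably where the authors will either give the details or reduce everything to the already-established Theorems~\ref{theo:2} and \ref{theo:3} as indicated by the phrase ``By using Theorems~\ref{theo:2} and \ref{theo:3}.''
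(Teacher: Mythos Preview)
Your argument for the co-polarized coefficients is essentially the paper's own proof: set up problems~(a) and~(b), use Theorem~\ref{theo:2} to get $t_1+t_1'=1$ and $r_1+r_1'=-1$, then use the congruence realizing self-complementarity to identify $t_1'$ with $t_1$ (invoking Theorem~\ref{theo:3} when the congruence flips the circular polarization). The paper organizes this slightly more cleanly by explicitly splitting into the cases of an even or odd number of reflections in the congruence $\mathcal{X}$: in the even case a circularly polarized plane wave is an eigenmode of $\mathcal{X}$, so $t_1'=t_1$ directly; in the odd case $\mathcal{X}$ swaps LCP and RCP, so $t_1'=t\sub{RR}$, and Theorem~\ref{theo:3} then gives $t\sub{RR}=t\sub{LL}=t_1$. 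Either way $t_1=1/2$, $r_1=-1/2$, and Theorem~\ref{theo:3} supplies the other helicity. Your ``main obstacle'' paragraph already contains exactly this case analysis in words.

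One correction, however: your side claim that $t_2=t\sub{RL}=0$ and $r_2=r\sub{LL}=0$ is \emph{not} part of Theorem~\ref{theo:4}, and it is not true for a general self-complementary metasurface. That vanishing is precisely what the extra $n$-fold rotational hypothesis of Theorem~\ref{theo:5} is needed for. Your argument for it has a gap: the assertion ``any $\ee^{\ii\phi}$ still gives $t_2=0$'' fails at $\phi=\pi$, and more importantly, when the congruence contains an odd number of reflections it maps the cross-polarized channel of problem~(b) to $t\sub{LR}$ on the original surface, not to $t\sub{RL}$. Theorem~\ref{theo:2} then yields only a relation of the form $t\sub{RL}+\ee^{\ii\phi}t\sub{LR}=0$, which does not force either to vanish. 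So simply drop the $t_2,r_2$ claims; the rest of your proof stands and matches the paper.
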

\begin{proof}

This situation is regarded as problem~(a) shown in Fig.~\ref{fig:two_problems}.
Because the metasurface is self-complementary, its complement 
returns to the original metasurface by the finite numbers of reflections.
The product of these operations is denoted by $\mathcal{X}$.
Problem~(b) related to problem~(a) through Theorem~\ref{theo:1} is considered.
Applying $\mathcal{X}$ to all fields and structures of problem~(b),
we have problem~(c).
Now, we consider the two cases where even and odd numbers of the reflections 
are involved in $\mathcal{X}$.
In the even case,
$(\tilde{\vct{E}}\sub{in}, \tilde{\vct{H}}\sub{in})$ is 
an eigenmode for $\mathcal{X}$.
Therefore, problem~(c) is identical to problem~(a)
except for the total phase, and $t'_{1}=t_{1}$ is satisfied,
where the transmission coefficient $t'_1$ of problem~(b) is defined in Sec.~\ref{sec:3}.
In the case of odd reflections,
the polarization is changed by $\mathcal{X}$ (for example, from LCP to RCP),
but Theorem~\ref{theo:3} assures $t'_{1}=t_{1}$.
Finally, we obtain $t'_{1}=t_{1}=1/2$ from Theorem~\ref{theo:2} for both cases.
\end{proof}
We note that the frequency-independent transmission of 
self-complementary metasurfaces is
valid in the high-frequency range with diffraction.
In the following, we consider subclasses of self-complementary metasurfaces
shown in Fig.~\ref{fig:classification}.

If a metasurface has rotational symmetry
in addition to self-complementarity [see Fig.~\ref{fig:classification}(a)], 
we have the following theorem.
\begin{theo}
\label{theo:5}
For normal incidence of a plane wave with an arbitrary polarization onto
a self-complementary metasurface with 
$n$-fold rotational symmetry $(n\geq 3)$, 
$t_{1}=1/2$, $r_{1}=-1/2$ and $t_{2}=0$, $r_{2}=0$ are satisfied.
Half the incident power is absorbed by the metasurface
in the frequency range without diffraction.
\end{theo}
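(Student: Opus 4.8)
The plan is to add the $n$-fold rotational symmetry to the machinery of Theorem~\ref{theo:4}; this is the only new ingredient. Let $R$ denote the rotation by $2\pi/n$ about the axis $\vct e_z$, and first consider a circularly polarized normal-incidence wave. Since $R\vct e_\pm=\ee^{\mp\ii 2\pi/n}\vct e_\pm$, the incident fields $(\check E_0\vct e_+,-\ii\check H_0\vct e_+)\ee^{\ii k_0 z}$ and $(\check E_0\vct e_-,\ii\check H_0\vct e_-)\ee^{-\ii k_0 z}$ are eigenmodes of $R$ with eigenvalues $\ee^{\mp\ii 2\pi/n}$. The metasurface, and hence the whole boundary-value problem, is invariant under $R$, so by uniqueness of the scattering solution the total field --- and therefore the scattered field --- is an eigenmode of $R$ with the same eigenvalue as the incident wave.

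Next I would push this eigenvalue through the $0$th-order decomposition of Sec.~\ref{sec:3}. For an LCP incidence the co-polarized transmitted mode $(\tilde{\vct E}^+_1,\tilde{\vct H}^+_1)$ has transverse electric field $\propto\vct e_+$, hence $R$-eigenvalue $\ee^{-\ii 2\pi/n}$, matching the incident wave, while the cross-polarized mode $(\tilde{\vct E}^+_2,\tilde{\vct H}^+_2)$ has transverse field $\propto\vct e_-$ and eigenvalue $\ee^{+\ii 2\pi/n}$. The mirrored modes $(\tilde{\vct E}^-_\alpha,\tilde{\vct H}^-_\alpha)$ keep the same transverse electric field, so the reflected modes carry the same pair of eigenvalues. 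Equating eigenvalues in $\sum_\alpha t_\alpha(\tilde{\vct E}^+_\alpha,\tilde{\vct H}^+_\alpha)$ and in the analogous reflected sum then yields $t_2(\ee^{+\ii 2\pi/n}-\ee^{-\ii 2\pi/n})=0$ and $r_2(\ee^{+\ii 2\pi/n}-\ee^{-\ii 2\pi/n})=0$. For $n\geq3$ one has $\sin(2\pi/n)\neq0$, so $t_2=r_2=0$; this is precisely where the hypothesis $n\geq3$ is used (it is the well-known fact that $n$-fold rotational symmetry with $n\geq3$ forbids polarization conversion at normal incidence). Theorem~\ref{theo:4} then supplies $t_1=1/2$ and $r_1=-1/2$ for circular incidence.

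Because $t_1$ and $r_1$ take the same values for LCP and for RCP and there is no cross-coupling, the $0$th-order transmission and reflection act on the transverse electric field as the scalars $1/2$ and $-1/2$; being multiples of the identity on the two-dimensional polarization space, they are basis-independent, so $t_1=1/2$, $r_1=-1/2$, $t_2=r_2=0$ hold for a normally incident plane wave of arbitrary polarization. For the absorption claim I would close with energy balance: in the frequency range without diffraction only the $0$th-order modes propagate, so the transmitted and reflected time-averaged power fractions are $|t_1|^2+|t_2|^2=1/4$ and $|r_1|^2+|r_2|^2=1/4$ (recall $(\tilde{\vct E}^+_2,\tilde{\vct H}^+_2)$ is normalized to the power flow of $(\tilde{\vct E}^+_1,\tilde{\vct H}^+_1)$, which at normal incidence equals the incident power, and the mirrored modes carry the same flux); hence a fraction $1-1/4-1/4=1/2$ of the incident power is absorbed.

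I expect the main obstacle to be the bookkeeping of the second paragraph: one has to be careful that the rotation eigenvalue is fixed by the transverse electric-field components and not by the helicity, which flips under the $z$-reversal built into the definition of $(\tilde{\vct E}^-_\alpha,\tilde{\vct H}^-_\alpha)$; and one should check that the scattered field inherits the incident eigenvalue even though the $z\leq0$ expansion is written as the incident wave plus the reflected modes. Once the eigenvalue assignments are settled, everything else is a one-line computation.
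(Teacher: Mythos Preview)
Your proposal is correct and follows essentially the same route as the paper: use the $n$-fold rotational symmetry at normal incidence to kill the cross-polarized $0$th-order coefficients, then invoke Theorem~\ref{theo:4} for the diagonal values, note that a scalar multiple of the identity is basis-independent, and finish with energy balance. The only cosmetic difference is that the paper packages the rotational-symmetry step as the matrix commutation $P^{-1}\tau P=\tau$ in the circular basis rather than tracking the $R$-eigenvalues of the individual modes as you do.
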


\begin{proof}
We consider two incident waves in $z\leq 0$:
$(\tilde{\vct{E}}\sur{L}\sub{in},\tilde{\vct{H}}\sub{in}\sur{L})=(\check{E}_0 \vct{e}_+ ,-\ii \check{H}_0 \vct{e}_+  )\ee^{\ii k_0 z }$ and 
$(\tilde{\vct{E}}\sub{in}\sur{R},\tilde{\vct{H}}\sub{in}\sur{R})=(\check{E}_0 \vct{e}_- ,\ii \check{H}_0 \vct{e}_-  )\ee^{\ii k_0 z }$.
Here, we define $(\tilde{\vct{E}}_{\alpha}^{+, \beta},
\tilde{\vct{H}}_{\alpha}^{+,\beta})$ as $(\tilde{\vct{E}}_{\alpha}\sur{+},\tilde{\vct{H}}_{\alpha}\sur{+})$ for an incident wave with polarization $\beta$.
We adjust the phase of $(\tilde{\vct{E}}_{2}\sur{+},\tilde{\vct{H}}_{2}\sur{+})$ 
so as to satisfy
$(\tilde{\vct{E}}_{2}\sur{+,R},\tilde{\vct{H}}_{2}\sur{+,R})=(\tilde{\vct{E}}_{1}\sur{+,L},\tilde{\vct{H}}_{1}\sur{+,L})$ and 
$(\tilde{\vct{E}}_{2}\sur{+,L},\tilde{\vct{H}}_{2}\sur{+,L})=(\tilde{\vct{E}}_{1}\sur{+,R},\tilde{\vct{H}}_{1}\sur{+,R})$ on $z=0$, for each incident wave.
In this situation, we can define a complex transmittance matrix
 \begin{equation}
  \tau = 
\begin{bmatrix}
 t\sub{LL}& t\sub{LR}\\
 t\sub{RL}& t\sub{RR}
\end{bmatrix}.  \label{eq:10}
 \end{equation}
For a circularly polarized basis, a rotation by $2\pi/n$ about $z$ axis is represented by
 \begin{equation}
P:=\begin{bmatrix}
\ee^{-\ii \frac{2\pi}{n}}&0\\
0&\ee^{\ii  \frac{2\pi}{n}}
\end{bmatrix}.  \label{eq:11}
 \end{equation}
Because of $n$-fold symmetry, $P^{-1} \tau P = \tau$ is satisfied, and
then $t\sub{RL}=t\sub{LR}=0$.
Therefore, we obtain 
\begin{equation}
 \tau=\begin{bmatrix}
\frac{1}{2}&0\\
0&\frac{1}{2}
\end{bmatrix},  \label{eq:12}
\end{equation}
from Theorem~\ref{theo:4}.
Because $\tau$ is proportional to the identity matrix, 
$t_{1}=1/2$, $r_{1}=-1/2$
and $t_{2}=0$, $r_{2}=0$ are satisfied for an incident plane wave with an arbitrary polarization. 

In the frequency range without diffraction, 
the Fourier components of $(\tilde{\vct{E}}^{\pm}\sub{s}, \tilde{\vct{H}}^{\pm}\sub{s})$ with $(p,q)\ne 0$  
are evanescent waves.
For evanescent waves, the real part of the $z$-component 
of Poyinting vectors is zero; therefore, they do not carry energy out of $z=0$.
The remaining power $A=1-|t_{1}|^2-|r_{1}|^2=1/2$ is absorbed in the metasurface.
\end{proof}

From Theorem~\ref{theo:5}, we find that the metasurface can absorb total incident energy as follows:
\begin{theo}
\label{theo:6}
If we excite a self-complementary metasurface with 
$n$-fold rotational symmetry $(n\geq 3)$ 
by two in-phase plane waves 
$(\check{\vct{E}}_0, \check{\vct{H}}_0)\ee^{\ii k_0 z}$ from $z\leq 0$
and $(\check{\vct{E}}_0, -\check{\vct{H}}_0)\ee^{-\ii k_0 z}$ from $z\geq 0$
with an arbitrary polarization $(\check{\vct{E}}_0, \check{\vct{H}}_0)$,
the incident power is perfectly absorbed
in the frequency range without diffraction.
\end{theo}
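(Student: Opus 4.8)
The plan is to obtain Theorem~\ref{theo:6} as a linear superposition of two single-beam scattering problems, using the mirror symmetry of the metasurface about $z=0$ together with the already-established Theorem~\ref{theo:5}. Write the prescribed illumination as the sum of problem~(I), the plane wave $(\check{\vct{E}}_0,\check{\vct{H}}_0)\ee^{\ii k_0 z}$ incident from $z\leq 0$, and problem~(II), the plane wave $(\check{\vct{E}}_0,-\check{\vct{H}}_0)\ee^{-\ii k_0 z}$ incident from $z\geq 0$; by linearity the total field is the sum of the scattered states of (I) and (II). Problem~(I) is exactly the setting of Theorem~\ref{theo:5}: in the frequency range without diffraction the only propagating channel is the $0$th order, and $t_1=1/2$, $r_1=-1/2$, $t_2=r_2=0$. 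Since the totally transmitted wave of (I) is $(\check{\vct{E}}_0,\check{\vct{H}}_0)\ee^{\ii k_0 z}$ and its mirror image about $z=0$ is $(\check{\vct{E}}_0,-\check{\vct{H}}_0)\ee^{-\ii k_0 z}$, the $0$th-order part of the field of (I) is $(\check{\vct{E}}_0,\check{\vct{H}}_0)\ee^{\ii k_0 z}-\tfrac12(\check{\vct{E}}_0,-\check{\vct{H}}_0)\ee^{-\ii k_0 z}$ in $z\leq 0$ and $\tfrac12(\check{\vct{E}}_0,\check{\vct{H}}_0)\ee^{\ii k_0 z}$ in $z\geq 0$, with the rest evanescent.

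Next I would relate problem~(II) to problem~(I) by the reflection $z\to-z$. Because $\vct{E}$ is a polar and $\vct{H}$ an axial vector field, this reflection carries the transverse plane wave $(\check{\vct{E}}_0,\check{\vct{H}}_0)\ee^{\ii k_0 z}$ to $(\check{\vct{E}}_0,-\check{\vct{H}}_0)\ee^{-\ii k_0 z}$, and it leaves the (mirror-symmetric) metasurface invariant; hence, by uniqueness of the scattering solution, the reflected image of the solution of (I) is the solution of (II). Therefore the $0$th-order part of (II) is $\tfrac12(\check{\vct{E}}_0,-\check{\vct{H}}_0)\ee^{-\ii k_0 z}$ in $z\leq 0$ and $(\check{\vct{E}}_0,-\check{\vct{H}}_0)\ee^{-\ii k_0 z}-\tfrac12(\check{\vct{E}}_0,\check{\vct{H}}_0)\ee^{\ii k_0 z}$ in $z\geq 0$, up to evanescent terms. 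Adding the two solutions, the outgoing $0$th-order amplitudes cancel on both sides: in $z\leq 0$ the coefficient of $\ee^{-\ii k_0 z}$ is $-\tfrac12+\tfrac12=0$, and in $z\geq 0$ the coefficient of $\ee^{\ii k_0 z}$ is $\tfrac12-\tfrac12=0$. The total field thus reduces to the two prescribed incident plane waves plus evanescent near fields localized around $z=0$.

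I would then close with an energy-balance argument on a slab $V=U\times[-h/2,h/2]$ with $h\to\infty$, of the kind already used in the proof of Theorem~\ref{theo:5}: on the planes $z=\pm h/2$ the evanescent Fourier components carry zero time-averaged power and the outgoing propagating components vanish, so the net power flux through $\partial V$ equals the incoming flux of the two incident plane waves; hence all of that power is dissipated in the resistive elements, i.e.\ coherent perfect absorption. The only genuinely delicate point is the bookkeeping under $z\to-z$ — keeping straight that $\vct{E}$ transforms as a polar vector and $\vct{H}$ as an axial one, so that the reflected copy of the totally transmitted beam of (I) is the correctly normalized $(\check{\vct{E}}_0,-\check{\vct{H}}_0)\ee^{-\ii k_0 z}$ — and noting that the absence of polarization conversion guaranteed by Theorem~\ref{theo:5} ($t_2=r_2=0$, i.e.\ $\tau$ proportional to the identity matrix) ensures there are no further $0$th-order channels that could survive the cancellation. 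Everything else is linear superposition and the conservation law invoked above.
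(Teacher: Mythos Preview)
Your argument is correct and complete. It differs from the paper's own proof, which is much terser: there the authors simply note that under single-sided illumination Theorem~\ref{theo:5} gives absorption $A=1/2$, and then argue that adding the second, mirror-related beam doubles the tangential electric field on $z=0$, so the ohmic dissipation $\propto|\mathcal{P}\vct{E}|^2/Z\sub{surf}$ is quadrupled; $4\times(P/2)=2P$ equals the total incident power, hence perfect absorption. Your route instead superposes the two single-beam solutions explicitly, uses the mirror map $z\to-z$ (with the correct polar/axial bookkeeping) to identify problem~(II) with the reflected copy of problem~(I), and shows that the outgoing $0$th-order amplitudes cancel, so energy balance forces all the incoming power into the resistive sheet. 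The paper's argument is shorter and physically transparent but leans on the quadratic local-absorption law and an unstated appeal to mirror symmetry to justify ``the electric field is doubled''; your approach is more self-contained, makes the symmetry step explicit, and does not invoke the microscopic form of the loss --- it only needs the vanishing of outgoing propagating channels and the fact that evanescent orders carry no net power, both already used in Theorem~\ref{theo:5}.
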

\begin{proof}
In the case of one excitation $(\check{\vct{E}}_0, \check{\vct{H}}_0)\ee^{\ii k_0 z}$ from $z\leq 0$, half of the power is absorbed in the frequency range without diffraction.
If we excite from both sides in phase, the electric field is doubled, 
and then absorption is quadrupled. Therefore, all of the incident power is absorbed.
\end{proof}
If we excite the above self-complementary metasurface 
by two antiphase plane waves, there is no absorption.
This is because boundary conditions at $z=0$ are already satisfied
without induced currents and charges. The perfect absorption is only realized when two beams have the correct relative phase and amplitude.
This function is referred to as coherent perfect absorption.\cite{Chong2010,Wan2011}
We note that self-complementarity is not a necessary condition for 
the frequency-independent response described in Theorem~\ref{theo:7} because a
 similar frequency-independent 
response can be seen in other systems, 
such as percolated metallic films\cite{Yagil1987,Yagil1988,Gadenne1988,Gadenne1989,Beghdadi1989,Davis1991,Sarychev1995} and two identical lamellar gratings.\cite{Botten1997}

There is another interesting class of 
self-complementary metasurfaces.
If a metasurface returns to the original one by just a translation after the impedance inversion about $Z_0/2$,
we say that it has {\it translational self-complementarity} [see Fig.~\ref{fig:classification}(b)].
This subclass of self-complementary metasurfaces has the following property.
\begin{theo}
\label{theo:7}
In the case of an oblique incidence of a circularly polarized plane wave onto
a metasurface with translational self-complementarity, 
$t\sub{RR}=t\sub{LL}=1/2$ and
$r\sub{LR}=r\sub{RL}=-1/2$ are satisfied.
\end{theo}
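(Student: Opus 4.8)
The plan is to adapt the proof of Theorem~\ref{theo:4}, replacing the congruence $\mathcal{X}$ there by a pure translation, so that circular polarization alone does the work with no reflection involved. Treat the stated oblique, circularly polarized incidence as problem~(a), using the general $0$th-order formalism of Appendix~\ref{append2} so that diffraction is allowed. Let problem~(b) be the partner supplied by Theorem~\ref{theo:1}, with $Z\sub{surf}'={Z_0}^2/(4Z\sub{surf})$ and incident wave $\mathcal{R}_{-\pi/2}(\tilde{\vct{E}}\sub{in},\tilde{\vct{H}}\sub{in})$; Theorem~\ref{theo:2} then gives $t_{1}+t_{1}'=1$, so it suffices to prove $t_{1}=t_{1}'$.

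First I would translate all fields, sources and the metasurface of problem~(b) by the in-plane vector $\vct{d}$ that carries the complementary metasurface back onto the original one, i.e. apply a translation $\mathcal{T}$, obtaining a valid scattering problem~(c). By the definition of translational self-complementarity, problem~(c) has exactly the metasurface of problem~(a). Its incident wave is $\mathcal{T}$ applied to $\mathcal{R}_{-\pi/2}(\tilde{\vct{E}}\sub{in},\tilde{\vct{H}}\sub{in})$, and here circular polarization is decisive: a circularly polarized plane wave is an eigenstate of $\mathcal{R}_\theta$ (Sec.~\ref{sec:2}), and an in-plane translation multiplies a plane wave by a phase $\ee^{-\ii\vct{k}_0\cdot\vct{d}}$, so this incident wave equals a scalar multiple $\lambda\,(\tilde{\vct{E}}\sub{in},\tilde{\vct{H}}\sub{in})$ of the incident wave of problem~(a).

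Next I would compute the $0$th-order transmitted field of problem~(c) in two ways. Since problem~(c) has the same structure as problem~(a) and an incident amplitude scaled by $\lambda$, linearity gives $\lambda\sum_\alpha t_\alpha(\tilde{\vct{E}}^+_\alpha,\tilde{\vct{H}}^+_\alpha)$. Since problem~(c) is also the translate of problem~(b), it equals $\ee^{-\ii\vct{k}_0\cdot\vct{d}}\sum_\alpha t'_\alpha\,\mathcal{R}_{-\pi/2}(\tilde{\vct{E}}^+_\alpha,\tilde{\vct{H}}^+_\alpha)$, using $(\tilde{\vct{E}}'^+_\alpha,\tilde{\vct{H}}'^+_\alpha)=\mathcal{R}_{-\pi/2}(\tilde{\vct{E}}^+_\alpha,\tilde{\vct{H}}^+_\alpha)$. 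Equating the co-polarized ($\alpha=1$) components and using that the duality eigenvalue of $(\tilde{\vct{E}}^+_1,\tilde{\vct{H}}^+_1)=(\tilde{\vct{E}}\sub{TT},\tilde{\vct{H}}\sub{TT})$ coincides with that of the incident wave, every phase cancels and $t_{1}=t_{1}'$. With $t_{1}+t_{1}'=1$ this yields $t_{1}=1/2$, i.e. $t\sub{LL}=1/2$; the same argument with the opposite handedness gives $t\sub{RR}=1/2$. Continuity of the tangential electric field at $z=0$ (as in the proof of Theorem~\ref{theo:3}) then gives $1+r\sub{RL}=t\sub{LL}$ and $1+r\sub{LR}=t\sub{RR}$, hence $r\sub{LR}=r\sub{RL}=-1/2$.

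The hard part will be the phase bookkeeping: three phases appear --- the eigenvalue $\ee^{\pm\ii\pi/2}$ of $\mathcal{R}_{-\pi/2}$ on the circularly polarized incident wave, the translation phase $\ee^{-\ii\vct{k}_0\cdot\vct{d}}$, and the phase $\ee^{\pm\ii\pi/2}$ that $\mathcal{R}_{-\pi/2}$ puts on the co-polarized transmitted mode --- and one must check they cancel \emph{exactly}, so that $t_{1}=t_{1}'$ rather than merely $|t_{1}|=|t_{1}'|$. This is precisely where circular polarization is needed and a generic polarization would fail, since then $\mathcal{R}_{-\pi/2}(\tilde{\vct{E}}\sub{in},\tilde{\vct{H}}\sub{in})$ would be a genuinely different input and problem~(c) a genuinely different scattering problem. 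A minor point is to confirm that everything survives in the diffracting regime, which only requires keeping the Bloch/$0$th-order conventions of Appendix~\ref{append2} and noting that an in-plane translation leaves each diffraction order's coefficient unchanged.
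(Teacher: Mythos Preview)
Your proposal is correct and follows essentially the same route as the paper: set up problem~(b) via Theorem~\ref{theo:1}, translate by $\mathcal{T}$ to problem~(c), observe that a circularly polarized plane wave is an eigenmode of $\mathcal{T}\mathcal{R}_{-\pi/2}$ so that (c) is (a) up to an overall phase, conclude $t_1'=t_1$, and finish with Theorem~\ref{theo:2}. The only cosmetic difference is that the paper obtains $r_1=-1/2$ by running the identical eigenmode argument in $z\leq 0$ (so $r_1'=r_1$ and then $r_1+r_1'=-1$), whereas you derive it from the tangential-$\vct{E}$ continuity relation $1+r_1=t_1$; both are fine. Your ``hard part'' about phase bookkeeping is actually easier than you suggest: once (c) differs from (a) only by a global scalar $\lambda$ on \emph{all} fields, the coefficients $t_\alpha,r_\alpha$ are ratios and $\lambda$ drops out automatically, so no delicate cancellation is needed.
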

\begin{proof}

We regard this situation as problem~(a) shown in Fig.~\ref{fig:two_problems}.
The metasurface returns to the original position
by a translation $\mathcal{T}$ together with the impedance inversion.
Problem~(b) can be related to problem~(a) through Theorem~\ref{theo:1}.
We introduce problem~(c) in which 
the incident wave and the metasurface of problem~(b) 
are translated by $\mathcal{T}$.
From the definition of $\mathcal{T}$, the metasurface of 
problem~(c) is the same as that in problem~(a).
The incident field of problem~(c) is written as
$\mathcal{T} \mathcal{R}_{-\pi/2}(\tilde{\vct{E}}\sub{in}, \tilde{\vct{H}}\sub{in})$.
Because $(\tilde{\vct{E}}\sub{in}, \tilde{\vct{H}}\sub{in})$ is 
an eigenmode for $\mathcal{T} \mathcal{R}_{-\pi/2}$,
the incident wave of problem~(c) is identical to 
that of problem~(a) except for the total phase.
In this way, $t'_1=t_1$ and $r'_1=r_1$ are confirmed.
From Theorem~\ref{theo:2}, we have
$t'_{1}=t_{1}=1/2$ and $r'_{1}=r_{1}=-1/2$.
\end{proof}
This theorem shows that self-complementary metasurfaces can be used as
beam splitters. 
The extension for general diffraction orders is discussed in 
Appendix~\ref{append3}.

\section{\label{sec:5}Examples: checkerboard metasurfaces}
In this section, we apply the current theory for
checkerboard metasurfaces\cite{Compton1984,Takano2009, Edmunds2010, Kempa2010, Ramakrishna2011} and confirm its validity by simulations.
First, we consider an ideal checkerboard metasurface shown 
in Fig.~\ref{fig:checker_board}(a).
It is expected that the ideal checkerboard metasurface should
exhibit a frequency-independent response
because of its self-complementarity. 
However, it is known that the ideal checkerboard 
metasurface cannot be realized.
This is explained as follows.\cite{Takano2009} The electromagnetic response 
of the checkerboard metasurface drastically changes
depending on whether the square metals are connected or not.
The transmittance and reflectance do not converge
when the structure approaches 
the ideal checkerboard metasurface.
Furthermore, it has also been reported that there is 
an instability in numerical calculations for the ideal checkerboard metasurface,
and the checkerboard metasurfaces exhibit percolation effects near the ideal checkerboard metasurface.\cite{Kempa2010}
\begin{figure}[!bt]
\includegraphics[width=80mm]{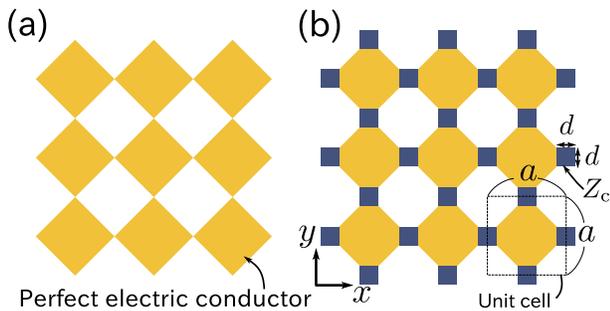}
\caption{\label{fig:checker_board}
(Color online) (a) Ideal checkerboard metasurface (b) Resistive checkerboard metasurface}
\end{figure}

By using our theory, we can give another explanation 
without relying on asymptotic behaviors.
From Theorem~\ref{theo:5}, 
the power transmission $T=|t_1|^2$ and reflection $R=|r_1|^2$
should satisfy $T=R=1/4$ for
the ideal checkerboard metasurface with 4-fold rotational symmetry.
However, energy conservation means $T+R=1$
in the frequency range without diffraction,
because there is no absorption
in the perfect checkerboard metasurface.
This contradiction implies that the ideal checkerboard metasurface is unphysical.

The above explanation gives us another insight:
we may realize the frequency-independent response of a checkerboard metasurface
if resistive elements are introduced.
We replace the singular contacts with tiny resistive sheets 
with a surface impedance $Z\sub{c}$
and obtain {\it a resistive checkerboard metasurface} 
shown in Fig.~\ref{fig:checker_board}(b).
When $Z\sub{c}=Z_0/2$ is satisfied, the resistive checkerboard metasurface
is self-complementary and is expected to exhibit a frequency-independent response. 

\begin{figure*}[!htb]
\includegraphics[width=170mm]{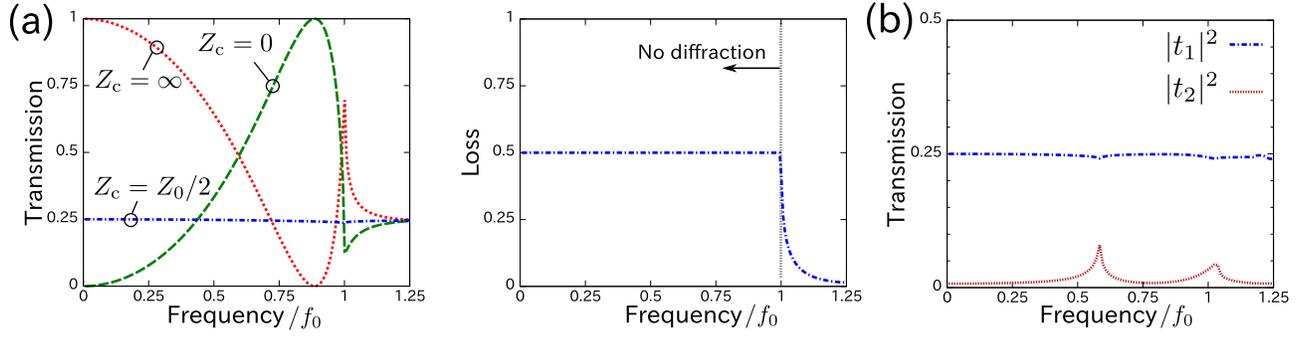}
\caption{\label{fig:simulation} 
(Color online) (a) Power transmission and loss spectra for normally incident $x$-polarized plane waves entering into resistive checkerboard metasurfaces with $d/a=0.2$. 
The left graph shows power transmission spectra for resistive checkerboard metasurfaces with $Z\sub{c}=0,\, Z_0/2$, and $\infty$.
The right graph shows the loss spectrum calculated for the self-complementary resistive checkerboard metasurface with $Z\sub{c}=Z_0/2$.
(b) The calculated spectra of $|t_{1}|^2$ and  $|t_{2}|^2$
 for an oblique incident circularly polarized plane wave entering into the resistive checkerboard metasurfaces with $Z\sub{c}=Z_0/2$ and $d/a=0.2$.
The incident wave has a wave vector $\vct{k}_0= (k_0 \sin\theta,0, k_0 \cos \theta)$ with $\theta=\pi/4$.
In these simulations, the frequencies are normalized by $f_0=c_0/a$.
}
\end{figure*}

For confirmation of our theory, we calculate the electromagnetic 
response of resistive checkerboard metasurfaces on $z=0$
by a commercial finite-element method solver (\textsc{Ansoft HFSS}).
In the simulation, normal incident $x$-polarized plane wave
is injected onto 
resistive checkerboard metasurfaces with $d/a=0.2$,
where $a$ and $d$ are the side length of the square unit cell and 
that of impedance sheet, respectively.
By imposing periodic boundaries on four sides, 
the electromagnetic fields in the unit cell
are calculated for $Z\sub{c}=0,\, Z_0/2$, and $\infty$.
We take into account diffracted modes with
$\{(p,q,\alpha)|-1\leq p \leq 1,\ -1\leq q \leq 1,\ \alpha=1,2\}$  (18 modes),
where $(p,q)$ and $\alpha$ are defined in Sec.~\ref{sec:3} and Appendix~\ref{append2}.

The left panel of Fig.~\ref{fig:simulation}(a) displays the spectra of 
power transmission $|t_{1}|^2$ 
for resistive checkerboard metasurfaces with $Z\sub{c}=0,\, Z_0/2$, and $\infty$.
The frequency in the horizontal axis is normalized by $f_0:= c_0/a$ ($c_0$ is the speed of light in a vacuum). 
Above the frequency $f_0$, 
diffracted waves can propagate in free space.
The checkerboard metasurfaces with $Z\sub{c}=0$ and $\infty$ resonate at the same frequency $f/f_0=0.89$.
Babinet's principle assures that the sum of these 
transmission spectra equals 1 in the region of $f\leq f_0$,
because the checkerboard metasurface with $Z\sub{c}=\infty$
is complementary to that with $Z\sub{c}=0$.
For the resistive checkerboard metasurface with $Z\sub{c}=Z_0/2$, 
transmission equals to $1/4$ independent of frequency, 
even when diffraction takes place ($f\geq f_0$).
This constant response seems very 
strange, because metasurfaces made from metal usually 
exhibit a resonant response,
but it can be explained by Theorem~\ref{theo:5}.
In addition to the magnitude of transmission,
we also confirm the phase of $t_{1}$.
For $0<f/f_0\leq 1.25$,
we have $|\Im [t_{1}]/\Re [t_{1}]|<1.2\times 10^{-2}$.
This result shows $t_{1}=1/2$ expected by Theorem~\ref{theo:5}.

The right panel of Fig.~\ref{fig:simulation}(a) shows the spectrum of energy loss 
for the resistive checkerboard metasurface with $Z\sub{c}=Z_0/2$.
The loss is calculated by integration of the Poyinting vector 
over the resistive sheets.
In the frequency range $f\leq f_0$,
we can see that half the incident power is absorbed by the metasurface,
while the electromagnetic energy is converted to diffracted modes
in $f\geq f_0$. These results agree with Theorem~\ref{theo:5},
and coherent perfect absorption can be realized for the two-side excitations.
Perfect absorption occurs for any $d/a$.
The resistive checkerboard metasurface with tiny resistive sheets
can absorb energy in very small regions.
This property can be useful for the enhancement of non-linearity of resistance.

The resistive checkerboard metasurface with $Z\sub{c}=Z_0/2$ also has
translational self-complementarity.
Then, it exhibits frequency-independent response for oblique 
incident waves.
By using \textsc{HFSS}, we calculated the response of the resistive checkerboard metasurfaces with $Z\sub{c}=Z_0/2$ for an oblique 
incidence of a circularly polarized plane wave with incident angle $45^\circ$ in the $xz$-plane.
In this case, we obtain the same transmission spectra for 
the right and left circularly polarized incident waves.
\footnote{The resistive checkerboard metasurfaces on $z=0$ are invariant 
when we perform the rotation by $180^\circ$ about $x$ axis after the mirror reflection $z\rightarrow -z$. Because the helicity of an incident wave is changed under this operation, 
$t\sub{RR}=t\sub{LL}$ and $|t\sub{RL}|=|t\sub{LR}|$ are derived.
}
The obtained spectra of $|t_{1}|^2$ and $|t_{2}|^2$ are shown in Fig.~\ref{fig:simulation}(b).
We can see that $|t_{1}|^2=1/4$, while $|t_{2}|^2$ has two non-zero resonant peaks at $f/f_0=0.58$ and $1.02$.
Slight changes of $|t_{1}|^2$ are considered as numerical errors.
For $0<f/f_0\leq 1.25$, we have $|\Im [t_{1}]/\Re [t_{1}]|<1.7\times 10^{-2}$.
This result supports the validity of Theorem~\ref{theo:7}.
The two peaks of $|t_2|^2$ are originated from 
the interaction between lattice sites.
Periodic systems exhibit such singular behaviors 
when a diffracted beam grazes to the plane $z=0$ (Rayleigh condition),\cite{GarciadeAbajo2007}
and in our system, the Rayleigh condition is satisfied at $f/f_0=1$ and $f/f_0= 2-\sqrt{2}=0.586$.
These frequencies correspond to 
the peaks shown in the graph.
Then, $|t_2|^2$ shows resonant behaviors near these frequencies,
while $|t_1|^2$ should be constant because of translational self-complementarity.

\section{\label{sec:6}Summary}
In this paper, we analyzed theoretically
electromagnetic plane-wave scattering by self-complementary metasurfaces.
In order to study the response of self-complementary metasurfaces,
we first described the electromagnetic duality and Babinet's principle with resistive elements.
Next, by applying Babinet's principle, we obtained 
the relation of scattering coefficients for a metasurface and its complement.
Using this result,
we revealed that the frequency-independent transmission and reflection are realized for self-complementary metasurfaces. In the case of normal incidence of a circularly polarized plane wave onto a self-complementary metasurface, 
complex transmission and reflection coefficients of the 0th-order  diffraction must be $1/2$ and $-1/2$, respectively.
If a self-complementary metasurface additionally has $n$-fold rotational symmetry $(n\geq 3)$, 
the above result is valid for normal incidence of a plane wave with an arbitrary polarization.
Furthermore, we found that this type of metasurface acts as a coherent perfect absorber.
We also studied metasurfaces with translational self-complementarity.
For an oblique incidence of a circularly polarized plane wave to a metasurface with 
translational self-complementarity, 
complex transmission and reflection coefficients of the 0th diffraction order also equal to $1/2$ and $-1/2$, respectively.
These results are confirmed by numerical simulations for resistive checkerboard metasurfaces.

\begin{acknowledgments}
The authors would like to thank M.~Hangyo and S.~Tamate for fruitful discussions,
T.~Kobayashi for his support in numerical simulations,
and T.~McArthur for his helpful comments.
Y.~Terekhov, S.~M.~Barnett, and R.~C.~McPhedran are also gratefully acknowledged for 
giving us information on several papers.
This work was support
in part by Grants-in-Aid for Scientific Research Nos.~22109004 and 
25790065. YN acknowledges support from the Japan Society for the Promotion of Science.
\end{acknowledgments}

\appendix 
\section{\label{append1} The relation between totally transmitted and totally reflected waves}
We consider an incident wave $(\vct{E}\sub{in},\vct{H}\sub{in})$ in $z\leq 0$ 
and the totally transmitted wave $(\vct{E}\sub{TT},\vct{H}\sub{TT})$ in $z\geq 0$.
If there is a surface made of PEC on $z=0$, 
the incident wave is totally reflected.
This totally reflected wave is denoted by $(\vct{E}\sub{TR},\vct{H}\sub{TR})$.
We show that $(\vct{E}\sub{TR},\vct{H}\sub{TR})$ can be represented by
 $(\vct{E}\sub{TT},\vct{H}\sub{TT})$ like the method of images
used in electrostatics.
We define $\mathcal{M}$ as the mirror reflection with respect to $z=0$.
If we assume
\begin{equation}
 (\vct{E}\sub{TR},\vct{H}\sub{TR}) 
= - (\mathcal{M}\vct{E}\sub{TT},\mathcal{M}\vct{H}\sub{TT}),  \label{eq:13}
\end{equation}
the boundary condition of perfect electric conductor
is satisfied. This is because $\mathcal{P} (\vct{E}\sub{in}+\vct{E}\sub{TR})=\mathcal{P} (\vct{E}\sub{in}-\mathcal{M}\vct{E}\sub{TT})=0$ for $z=0$.
Then, the definition of Eq.~(\ref{eq:13}) is valid.
Because magnetic fields are axial vectors,
$\mathcal{P}\mathcal{M}\vct{H}\sub{TT} =  -\mathcal{P}\vct{H}\sub{in}$ 
is satisfied on $z=0$. From this equation and Eq.~(\ref{eq:13}),
\begin{equation}
 \mathcal{P}\vct{H}\sub{TR} = \mathcal{P} \vct{H}\sub{in}  \label{eq:14}
\end{equation}
is satisfied for $z=0$.

\section{\label{append2} Relation of scattering coefficients for all diffracted components}
We generalize Theorem~\ref{theo:2} to include all diffracted modes.
The two problems discussed in Sec.~\ref{sec:3} are considered.
For $(p,q)=(0,0)$, we define $(\tilde{\vct{E}}^+_{pq,1},\tilde{\vct{H}}^+_{pq,1}):=
(\tilde{\vct{E}}_{\mathrm{TT}}, \tilde{\vct{H}}_{\mathrm{TT}})$ 
and its perpendicular polarization state $(\tilde{\vct{E}}^+_{pq,2},\tilde{\vct{H}}^+_{pq,2})$. For $(p,q)\ne (0,0)$, we also define 
$(\tilde{\vct{E}}^+_{pq,1},\tilde{\vct{H}}^+_{pq,1})$
and $(\tilde{\vct{E}}^+_{pq,2},\tilde{\vct{H}}^+_{pq,2})$
that are two orthogonal polarized modes with the factor 
$\ee^{\ii \underline{\vct{k}}_{pq}\cdot \vct{x}} \ee^{\ii k_z z}$,
where $k_z = \sqrt{ 
|\vct{k}_0|^2-|{\underline{\vct{k}}_{pq}}|^2 
}$ ($\Im k_z \geq 0$).
The mirror symmetric fields of 
$(\tilde{\vct{E}}^+_{pq,\alpha}, \tilde{\vct{H}}^+_{pq,\alpha})$ 
with respect to $z=0$ are denoted 
by $(\tilde{\vct{E}}^-_{pq,\alpha}, \tilde{\vct{H}}^-_{pq,\alpha})$.

We then decompose the complex field of problem~(a) in $z\geq 0$
as $$\sum_{(p,q)\in \mathbb{Z}^2}\sum_{\alpha=1,2} t_{pq, \alpha}
(\tilde{\vct{E}}^+_{pq,\alpha}, \tilde{\vct{H}}^+_{pq,\alpha})$$
with complex transmission coefficients $t_{pq,\alpha}$.
In $z\leq 0$, the field is given by 
$$(\tilde{\vct{E}}\sub{in}, \tilde{\vct{H}}\sub{in})+\sum_{(p,q)\in \mathbb{Z}^2}\sum_{\alpha=1,2} r_{pq, \alpha}
(\tilde{\vct{E}}^-_{pq,\alpha}, \tilde{\vct{H}}^-_{pq,\alpha}).$$
For problem (b), we define
 $(\tilde{\vct{E}}'^\pm_{pq,\alpha}, \tilde{\vct{H}}'^\pm_{pq,\alpha}):=
\mathcal{R}_{\mp\pi/2} (\tilde{\vct{E}}^\pm_{pq,\alpha}, \tilde{\vct{H}}^\pm_{pq,\alpha})$.
The fields in problem (b) are represented as follows:
 $$\sum_{(p,q)\in \mathbb{Z}^2}\sum_{\alpha=1,2} t'_{pq,\alpha}(\tilde{\vct{E}}'^+_{pq, \alpha}, \tilde{\vct{H}}'^+_{pq, \alpha})$$
in $z \geq 0$, and 
$$(\tilde{\vct{E}}'\sub{in}, \tilde{\vct{H}}'\sub{in})+\sum_{(p,q)\in \mathbb{Z}^2}\sum_{\alpha=1,2}r'_{pq,\alpha}(\tilde{\vct{E}}'^-_{pq, \alpha}, \tilde{\vct{H}}'^-_{pq, \alpha})$$ 
in $z\leq 0$.
Now, we can generalize Theorem~\ref{theo:2} as follows:
\begin{theo}
\label{theo:8}
$t_{00,1}+t_{00,1}'=1$,
$r_{00,1}+r'_{00,1}=-1$,
and 
$t_{pq,\alpha}+t'_{pq,\alpha}=0$,
$r_{pq,\alpha}+r'_{pq,\alpha}=0$ for $(p,q,\alpha)\ne(0,0,1)$.
\end{theo}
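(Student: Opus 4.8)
The plan is to generalize the argument of Theorem~\ref{theo:2} mode by mode, exploiting the fact that the duality rotation $\mathcal{R}_{\pm\pi/2}$ acts on each diffracted plane-wave component independently. First I would write down the $0$th-order-plus-higher-order decomposition of the scattered field of problem~(a) in $z\geq 0$, namely
\begin{equation}
(\tilde{\vct{E}}^+\sub{s},\tilde{\vct{H}}^+\sub{s})= - (\tilde{\vct{E}}^+_{00,1}, \tilde{\vct{H}}^+_{00,1}) + \sum_{(p,q)\in\mathbb{Z}^2}\sum_{\alpha=1,2} t_{pq,\alpha} (\tilde{\vct{E}}^+_{pq,\alpha}, \tilde{\vct{H}}^+_{pq,\alpha}),
\end{equation}
the $-(\tilde{\vct{E}}^+_{00,1},\tilde{\vct{H}}^+_{00,1})$ term arising because the totally transmitted wave $(\tilde{\vct{E}}\sub{TT},\tilde{\vct{H}}\sub{TT})$ is not present in problem~(a) and must be subtracted from the physical field to isolate the radiated part, exactly as in Eq.~(\ref{eq:6}). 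Correspondingly, the scattered field of problem~(b) in $z\geq 0$ is $\sum_{(p,q),\alpha} t'_{pq,\alpha}(\tilde{\vct{E}}'^+_{pq,\alpha}, \tilde{\vct{H}}'^+_{pq,\alpha})$, with no subtraction term since problem~(b) is normalized against the PEC-reflected wave, which lives only in $z\leq 0$.

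Next I would invoke Theorem~\ref{theo:1}, which gives $(\tilde{\vct{E}}'^+\sub{s},\tilde{\vct{H}}'^+\sub{s})=\mathcal{R}_{\pi/2}(\tilde{\vct{E}}^+\sub{s},\tilde{\vct{H}}^+\sub{s})$ for the dual incident wave, and apply $\mathcal{R}_{-\pi/2}$ to the problem~(b) expansion. Using the definition $(\tilde{\vct{E}}'^+_{pq,\alpha}, \tilde{\vct{H}}'^+_{pq,\alpha})=\mathcal{R}_{-\pi/2}(\tilde{\vct{E}}^+_{pq,\alpha}, \tilde{\vct{H}}^+_{pq,\alpha})$ together with $\mathcal{R}_{-\pi/2}\mathcal{R}_{-\pi/2}=\mathcal{R}_{-\pi}=-\mathrm{id}$, one obtains
\begin{equation}
(\tilde{\vct{E}}^+\sub{s},\tilde{\vct{H}}^+\sub{s})= -\sum_{(p,q)\in\mathbb{Z}^2}\sum_{\alpha=1,2} t'_{pq,\alpha}(\tilde{\vct{E}}^+_{pq,\alpha}, \tilde{\vct{H}}^+_{pq,\alpha}).
\end{equation}
Comparing this with the problem~(a) expansion and matching the coefficient of each basis mode $(\tilde{\vct{E}}^+_{pq,\alpha}, \tilde{\vct{H}}^+_{pq,\alpha})$ gives $t_{00,1}-1=-t'_{00,1}$ and $t_{pq,\alpha}=-t'_{pq,\alpha}$ for all $(p,q,\alpha)\ne(0,0,1)$, which are the claimed transmission relations. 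The reflection relations $r_{00,1}+r'_{00,1}=-1$ and $r_{pq,\alpha}+r'_{pq,\alpha}=0$ follow from the identical argument applied in $z\leq 0$, where now the analogue of the subtracted totally transmitted wave is the totally reflected wave $(\tilde{\vct{E}}\sub{TR},\tilde{\vct{H}}\sub{TR})$ appearing in problem~(b): its image under $\mathcal{R}_{-\pi/2}$ relative to problem~(a) supplies the $-1$.

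The only genuine subtlety — and the step I would be most careful about — is the legitimacy of the term-by-term coefficient matching. This requires that the modes $\{(\tilde{\vct{E}}^\pm_{pq,\alpha}, \tilde{\vct{H}}^\pm_{pq,\alpha})\}$ form a basis in which the expansion is unique, i.e.\ that the in-plane wave vectors $\underline{\vct{k}}_{pq}= p\vct{b}_1+q\vct{b}_2+\mathcal{P}\vct{k}_0$ are distinct (Bloch/Floquet orthogonality on the unit cell) and that the two polarizations $\alpha=1,2$ are linearly independent for each $(p,q)$, including the propagating-versus-evanescent cases where $k_z=\sqrt{|\vct{k}_0|^2-|\underline{\vct{k}}_{pq}|^2}$ may be purely imaginary. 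Granting this — it is the standard Rayleigh expansion for a periodic interface — everything else is the routine bookkeeping already carried out for the $0$th order in Theorem~\ref{theo:2}, now repeated with the extra diffraction indices carried along.
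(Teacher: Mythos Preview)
Your proposal is correct and is exactly the approach the paper intends: it explicitly states that ``the proof of Theorem~\ref{theo:8} is similar to that of Theorem~\ref{theo:2},'' and your mode-by-mode extension of Eqs.~(\ref{eq:6})--(\ref{eq:8}), together with the analogous $z\leq 0$ argument using the totally reflected wave in problem~(b), is precisely that similarity spelled out. Your remark on the linear independence of the Rayleigh modes is the only point requiring care, and you have identified it correctly.
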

The proof of Theorem~\ref{theo:8} is similar to that of Theorem~\ref{theo:2}.

\section{\label{append3} General order diffraction by 
metasurfaces with translational self-complementarity}
We discuss the general scattering components
of diffracted waves by metasurfaces with translational self-complementarity.
An oblique incidence of a circularly polarized plane wave is considered.
We define $W:=\{(p,q)\in \mathbb{Z}^2; |\underline{\vct{k}}_{pq}| < |\vct{k}_0|\}$.
For $(p,q)\in W$, the waves with the wave vector 
$\underline{\vct{k}}_{pq}\pm  \sqrt{ 
{k_0}^2-|{\underline{\vct{k}}_{pq}}|^2 
}\,\vct{e}_z$ are not 
evanescent but propagating plane waves.
$(\tilde{\vct{E}}^+_{00, 1}, \tilde{\vct{H}}^+_{00,1})$
represents the totally transmitted plane wave with circular polarization.
For $(p,q)\in W$ satisfying $(p,q)\ne (0,0)$,
$(\tilde{\vct{E}}^+_{pq, 1}, \tilde{\vct{H}}^+_{pq,1})$
are selected to be the circularly polarized plane waves with the same helicity of 
$(\tilde{\vct{E}}^+_{00, 1}, \tilde{\vct{H}}^+_{00,1})$.
Now, Theorem~\ref{theo:7} is extended as follows:
\begin{theo}
\label{theo:9}
For the 0th-order modes, we have 
$t_{00,1}=t'_{00,1}=1/2$ and $r_{00,1}=r'_{00,1}=-1/2$.
For  $(p,q)\in W$ satisfying $(p,q)\ne (0,0)$, 
we have $t_{pq,1}=t'_{pq,1}=0$ and $r_{pq,1}=r'_{pq,1}=0$.
\end{theo}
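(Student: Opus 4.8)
The plan is to repeat the argument of Theorem~\ref{theo:7} but now tracking \emph{all} propagating diffraction orders rather than only the $(0,0)$ mode. First I would regard the oblique incidence of a circularly polarized plane wave as problem~(a) of Fig.~\ref{fig:two_problems}, with scattered fields decomposed into the modes $(\tilde{\vct{E}}^{\pm}_{pq,\alpha}, \tilde{\vct{H}}^{\pm}_{pq,\alpha})$ introduced in Appendix~\ref{append2}, using the coefficients $t_{pq,\alpha}$ and $r_{pq,\alpha}$. By definition of translational self-complementarity there is a translation $\mathcal{T}$ such that $\mathcal{T}$ composed with the impedance inversion about $Z_0/2$ returns the metasurface to itself. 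As in the proof of Theorem~\ref{theo:7}, I introduce problem~(b) as the Babinet-dual problem supplied by Theorem~\ref{theo:1}, and problem~(c) as problem~(b) translated by $\mathcal{T}$; then the metasurface of problem~(c) coincides with that of problem~(a), and its incident field is $\mathcal{T}\mathcal{R}_{-\pi/2}(\tilde{\vct{E}}\sub{in}, \tilde{\vct{H}}\sub{in})$, which is the original incident plane wave up to an overall phase because a circularly polarized plane wave is an eigenmode of $\mathcal{R}_{-\pi/2}$ and a plane wave is an eigenmode of the translation $\mathcal{T}$.

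The key new point is to see how the \emph{individual} diffraction orders transform under $\mathcal{T}$. Since $\mathcal{T}$ is an in-plane translation (say by $\vct{d}$ on $z=0$), each Fourier component $(\tilde{\vct{E}}^{\pm}_{pq,\alpha}, \tilde{\vct{H}}^{\pm}_{pq,\alpha})$, which carries the factor $\ee^{\ii \underline{\vct{k}}_{pq}\cdot\vct{x}}$, is merely multiplied by the phase $\ee^{\ii\underline{\vct{k}}_{pq}\cdot\vct{d}}$ under $\mathcal{T}$; it is not mixed with other orders, and the helicity is preserved. Hence, writing out the scattered field of problem~(c) by applying $\mathcal{T}\mathcal{R}_{-\pi/2}$ to that of problem~(b), and comparing with problem~(a), one finds that each mode's transmission/reflection coefficient in problem~(c) equals the corresponding coefficient of problem~(b) up to a per-mode phase and an overall phase; after dividing out the common overall phase of the incident wave the net per-mode phase must vanish by uniqueness, so I would get $t'_{pq,1}=t_{pq,1}$ and $r'_{pq,1}=r_{pq,1}$ for every $(p,q)\in W$ (with the circularly polarized mode choice of Appendix~\ref{append3}). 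Combining this with the generalized Babinet relations of Theorem~\ref{theo:8}, namely $t_{00,1}+t'_{00,1}=1$, $r_{00,1}+r'_{00,1}=-1$, and $t_{pq,1}+t'_{pq,1}=0$, $r_{pq,1}+r'_{pq,1}=0$ for $(p,q)\ne(0,0)$, yields $t_{00,1}=t'_{00,1}=1/2$, $r_{00,1}=r'_{00,1}=-1/2$, and $t_{pq,1}=t'_{pq,1}=0$, $r_{pq,1}=r'_{pq,1}=0$ otherwise, which is the claim.

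The main obstacle I anticipate is being careful about the identification ``problem~(c) $=$ problem~(a),'' because it requires that the helicity-respecting choice of modes in Appendix~\ref{append3} is consistent with the Babinet map $\mathcal{R}_{\mp\pi/2}$ used in Appendix~\ref{append2}: applying $\mathcal{R}_{\pi/2}$ to a circularly polarized plane wave returns the same circularly polarized plane wave up to a phase, so the primed modes $(\tilde{\vct{E}}'^{\pm}_{pq,1},\tilde{\vct{H}}'^{\pm}_{pq,1})$ are proportional to the unprimed ones with the \emph{same} helicity, and no basis re-labeling sneaks in. One must also check the eigenmode statement for $\mathcal{T}\mathcal{R}_{-\pi/2}$ acting on the incident wave in the oblique case, exactly as was done in Theorem~\ref{theo:7}; this is the same computation and carries over verbatim. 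Everything else is bookkeeping, and the proof can be dispatched by saying it parallels the proofs of Theorem~\ref{theo:7} and Theorem~\ref{theo:8}.
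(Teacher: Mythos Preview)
Your overall strategy is exactly what the paper intends: it simply says the result ``is proved in the same manner as Theorem~\ref{theo:7}.'' For the $(0,0)$ order your argument is correct and reproduces Theorem~\ref{theo:7} verbatim.

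There is, however, a gap in your treatment of the higher orders $(p,q)\ne(0,0)$. You assert that ``after dividing out the common overall phase of the incident wave the net per-mode phase must vanish by uniqueness,'' and conclude $t'_{pq,1}=t_{pq,1}$ for every $(p,q)\in W$. That step is not right. Under the in-plane translation $\mathcal{T}$ by $\vct{d}$, the diffracted mode with in-plane wave vector $\underline{\vct{k}}_{pq}$ picks up the phase $\ee^{\ii\underline{\vct{k}}_{pq}\cdot\vct{d}}$, whereas the incident wave picks up $\ee^{\ii\underline{\vct{k}}_{00}\cdot\vct{d}}$. Identifying problem~(c) with problem~(a) and matching mode by mode therefore gives
\[
t'_{pq,1}=\ee^{\ii(p\vct{b}_1+q\vct{b}_2)\cdot\vct{d}}\,t_{pq,1},
\]
not $t'_{pq,1}=t_{pq,1}$. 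Uniqueness fixes the \emph{total field}, not the phase of each Fourier component relative to a fixed basis, so it cannot force this residual factor to unity. Combining the corrected relation with Theorem~\ref{theo:8} yields $t_{pq,1}\bigl(1+\ee^{\ii(p\vct{b}_1+q\vct{b}_2)\cdot\vct{d}}\bigr)=0$, which forces $t_{pq,1}=0$ only when $(p\vct{b}_1+q\vct{b}_2)\cdot\vct{d}\not\equiv\pi\pmod{2\pi}$. Since applying $\mathcal{T}$ together with the impedance inversion twice returns the metasurface, $2\vct{d}$ is necessarily a lattice vector, and the exceptional phase can occur (e.g.\ $\vct{d}=\vct{a}_1/2$ with $(p,q)=(1,0)$). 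Thus your argument secures the $(0,0)$ statement but does not, as written, establish the vanishing of the helicity-preserving higher orders in full generality; the paper's one-line proof leaves this same phase bookkeeping unaddressed.
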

This theorem is proved in the same manner as Theorem~\ref{theo:7}.
The latter part of this theorem shows that 
helicities must be converted for
propagating waves with $(p,q)\ne (0,0)$ diffracted 
by metasurfaces with translational self-complementarity.

\nocite{*}

\end{document}